 \definecolor{Refkey}{RGB}{255,127,0}
 \definecolor{Labelkey}{RGB}{127,0,255}
  \def\SK@refcolor{\color{Refkey}}
  \def\SK@labelcolor{\color{Labelkey}}
  \definecolor{mdg}{RGB}{0,177,0} 
  \definecolor{mdb}{RGB}{0,0,191}
  \definecolor{mddb}{RGB}{0,0,91}
  \definecolor{mdy}{RGB}{255,69,0} 
  \definecolor{gray}{RGB}{99,99,99} 
  \definecolor{darkgreen}{RGB}{0,128,0} 
  \definecolor{darkblue}{RGB}{0,0,128}
\DeclareMathOperator{\Gr}{Gr}
\DeclareMathOperator{\oplrcorner}{\lrcorner}
\newtheorem{theorem}{Theorem}
\newtheorem{proposition}{Proposition}
\theoremstyle{definition}
\newtheorem{assumption}{Assumption}
\newtheorem{convention}{Convention}
\newtheorem{example}{Example}
\theoremstyle{remark}
\title{Grassmannian-parameterized solutions to direct-sum polygon and simplex equations}
\author{Aristophanes Dimakis \and Igor Korepanov}
\date{October 2020}
\begin{document}

\sloppy

\maketitle

\begin{abstract}
We consider polygon and simplex equations, of which the simplest nontrivial examples are pentagon (5-gon) and Yang--Baxter (2-simplex), respectively. We examine the general structure of $(2n+1)$-gon and $2n$-simplex equations in direct sums of vector spaces. Then we provide a construction for their solutions, parameterized by elements of the Grassmannian~$\Gr(n+1,2n+1)$.
\end{abstract}

\section{Introduction}

\subsection{Generalities about set-theoretic polygon and simplex equations}

Polygon and simplex equations are two families of algebraic equations that naturally arise, and find applications, in mathematical physics, topology, representation theory; there is also interesting combinatorics related to these equations. A short but comprehensive enough historical review, with many references, can be found in the Introduction to paper~\cite{DM-H}; here we just mention some aspects of our interest in the present paper.

The first nontrivial example of polygon equations is \emph{pentagon} equation
\begin{equation}\label{5g}
A^{(1)}_{12} A^{(3)}_{13} A^{(5)}_{23} =  A^{(4)}_{23} A^{(2)}_{12},
\end{equation}
below we also represent it graphically in Figure~\ref{fig:pentagon}. Concerning the \emph{superscripts} in~\eqref{5g}, we hope to explain below why we choose them like that; right now it is enough to say that they symbolize that $A^{(q)}$ with different~$q$'s may be different objects.

In the ``set-theoretic'' interpretation of~\eqref{5g}, the \emph{subscripts} represent the \emph{sets} where \emph{mappings}~$A^{(q)}$ act. Usually, these sets carry some algebraic structure, for instance, of a group or a quandle (the latter also called ``distributive groupoid''~\cite{Matveev,Kamada}). In~\eqref{5g}, three sets $X_1$, $X_2$ and~$X_3$ are involved; often---but not necessarily---they are copies of one fixed set~$X$. Each mapping $A^{(q)}_{ij}$ acts in the corresponding direct product:
\[
A^{(q)}_{ij}\colon\quad X_i\times X_j \to X_i\times X_j.
\]
Also, $A^{(q)}_{ij}$ is identified with the direct product of itself and the \emph{identity} mapping~$\mathbb 1_k$ in the lacking set~$X_k$, so the whole l.h.s.\ and r.h.s.\ of~\eqref{5g} act in $X_1 \times X_2 \times X_3$.

Similarly, the first nontrivial example of \emph{simplex} equations is the \emph{Yang--Baxter} equation
\begin{equation}\label{YB}
R^{(1)}_{12} R^{(2)}_{13} R^{(3)}_{23} =  R^{(3)}_{23} R^{(2)}_{13} R^{(1)}_{12},
\end{equation}
then goes \emph{Zamolodchikov tetrahedron} equation,
\begin{equation}\label{T}
R^{(1)}_{123} R^{(2)}_{145} R^{(3)}_{246} R^{(4)}_{356} =  R^{(4)}_{356} R^{(3)}_{246} R^{(2)}_{145} R^{(1)}_{123},
\end{equation}
then ``4-simplex equation'', and so on. In the set-theoretic setting, they are interpreted in the same way as described above. For instance, in~\eqref{T}, each~$R$ maps the direct product of three sets into itself:
\[
R^{(q)}_{ijk}\colon\quad X_i\times X_j\times X_k \to X_i\times X_j\times X_k,
\]
and is also identified when necessary with its direct product with the identity mappings in the lacking sets.

There are many interrelations between polygon and simplex equations. Of these, we mention here, and will exploit in this paper, the ``three-color decomposition'' of simplex equations. This means that, assuming a special form (see~\eqref{RABP} below) of solutions, the simplex equation breaks into three independent parts, two of which are nothing but polygon equations, while the third can be treated as a ``consistency condition'' between these two. Such decomposition first appeared, for pentagon and 4-simplex, in~\cite{KashS}, and was generalized to other equations and analyzed combinatorially in detail in~\cite{DM-H}. In some cases, it allows to construct simplex solutions \emph{from} polygon solutions.

\subsection{Direct-sum equations and motivation for their study}

Our ``direct-sum'' equations are a particular case of set-theoretic equations. Namely, each $X_i=V_i$ is a \emph{vector space} over a field~$F$; accordingly $A^{(q)}_{ij}\colon\; V_i\oplus V_j \to V_i\oplus V_j$ are \emph{linear operators}, and everything happens in the \emph{direct sum} of spaces~$V_i$. Field~$F$ is allowed to be \emph{finite}, as well as all~$V_i$.

Direct-sum \emph{Yang--Baxter} equations have been used in \textbf{knot theory} for quite a long time, although not necessarily under this name. Namely, the famous \emph{tricolorability}~\cite[p.~8]{GP} property (not to be confused with the three-color decomposition dealt with below in this paper!) that allows to distinguish the trefoil knot from the unknot uses the rule of coloring knot fragments that can be described in terms of the direct-sum Yang--Baxter, with the field $F=\mathbb F_3$ being the Galois field of three elements. More generally, the \emph{Alexander quandle}~\cite{Kamada}, considered over field~$F$, gives a coloring rule of a direct-sum Yang--Baxter kind, which can be written as
\[
\begin{pmatrix} x & y \end{pmatrix} \mapsto \begin{pmatrix} x & y \end{pmatrix} \begin{pmatrix} 1 & 1-t \\ 0 & t \end{pmatrix},
\]
where $t\in F$\/ is a parameter (note that we thus prefer in this paper to regard matrices as acting on \emph{rows} rather than columns).

Similarly to how Yang--Baxter is applied to knot theory, set-theoretic---and in particular direct-sum---solutions to \emph{polygon} equations can be expected to find their applications to piecewise linear manifold invariants. That is because they correspond naturally to \emph{Pachner moves}---elementary re-buildings of a triangulation of a given manifold. For instance, pentagon corresponds to Pachner move 2--3 in three dimensions.

Returning to Yang--Baxter, one more interesting area where its direct-sum solutions have appeared is the tropical limit of Kadomtsev--Petviashvili and Korteweg--deVries solitonic equations, see~\cite[formulas (5.6), (5.7)]{DM-H-tropical}. 

More general direct-sum simplex equations---generalizations of Yang--Baxter---appeared in Hietarinta's paper~\cite{Hietarinta}, where they were used for constructing ``permutation-type'' solutions to \emph{quantum} equations. Permutation-type solutions can be expected to be the first step in constructing more sophisticated solutions to higher simplex equations.

\subsection{Constant and nonconstant solutions}

Sets~$X_i$ introduced above are often \emph{naturally isomorphic} to one another, that is, $X_1,X_2,\ldots$ are just copies of one set~$X$; the same applies of course to vector spaces~$V_1,V_2,\ldots$\,. This allows us to speak of the case where our mappings, or linear operators, such as $A^{(q)}$ or~$R^{(q)}$, are also copies of one and the same mapping/operator for all~$q$'s. In this case we call the considered equation \emph{constant}. Constant equations are certainly of interest: enough to say that it is exactly the constant Yang--Baxter that finds its applications in \emph{knot theory} and its generalizations~\cite{CS} (the same applies to the quantum case~\cite{ADW}). Also, Hietarinta in his paper~\cite{Hietarinta} classified solutions to constant Yang--Baxter and higher simplex equations.

If mappings/operators are allowed to be different, the equation is called \emph{nonconstant}. We said already that this is exactly the case in our equations such as \eqref{5g} or~\eqref{YB}. So, our paper is about the \emph{non-constant} case.

\subsection{What is done in the main part of the paper}

Speaking more specifically about our linear operators, they will depend on parameters that can be thought of together as an element of the \emph{Grassmannian}~$\Gr(n+1,2n+1)$---an $(n+1)$-dimen\-sional plane in a $(2n+1)$-dimen\-sional linear space, $n=1,2,\ldots$\,. This construction provides solutions to $(2n+1)$-gon and $2n$-simplex equations. In the trivial but still instructive case $n=1$, these are \emph{trigon} and Yang--Baxter, see Example~\ref{x:n=1} below; for $n=2$, these are \emph{pentagon} and 4-simplex, and so on. All our solutions depend only on \emph{ratios of Pl\"ucker coordinates} of the Grassmannian, see \eqref{A-def} and~\eqref{B-def} below. Our ``polygons'' here have thus always an odd number~$(2n+1)$ of ``vertices'', so we leave for future work half of the possible equations; no doubt that they are interesting and deserve a separate research.

Below, in Section~\ref{s:structure}, we investigate the general structure of $(2n+\nobreak 1)$-gon and $2n$-simplex equations in direct sums of vector spaces. In particular, we describe the \emph{three-color decomposition}~\cite{DM-H} of the $2n$-simplex equations that permits in some cases to obtain their solutions from solutions of $(2n+1)$-gon equations. Although we do it for our direct-sum case, the reader can notice that the combinatorial structure will be the same for either general set-theoretical or quantum case. We show also how to obtain solutions to $(2n-1)$-simplex equations using a reduction procedure.

Then, in Section~\ref{s:sols}, we provide our Grassmannian-based solutions. The proof that they are indeed solutions for any~$n$ combines Grassmann algebra and Pl\"ucker bilinear relations.

Finally, in Section~\ref{s:d}, we discuss our results and possible directions for further research.

\section[$(2n+1)$-gon and $2n$-simplex equations in direct sums]{$\boldsymbol{(2n+1)}$-gon and $\boldsymbol{2n}$-simplex equations in direct sums}\label{s:structure}

\subsection{General notions}\label{ss:gn}

Polygon and simplex equations in direct sums of vector spaces are equations on linear operators. Denote these spaces, for a chosen equation,~$V_i$, where index~$i$ runs over some (finite) set~$\mathcal C \ni i$. The equation lives hence in
\begin{equation}\label{VC}
V_{\mathcal C} \, \stackrel{\mathrm{def}}{=} \, \bigoplus_{i\in \mathcal C} V_i.
\end{equation}

Each individual operator, however, acts nontrivially only in \emph{some} spaces. That is, for such operator~$A$ there is a subset~$\mathcal B \subset \mathcal C$ such that both spaces $V_{\mathcal B}$ and $V_{\mathcal C \setminus \mathcal B}$ (defined in the same way as in~\eqref{VC}) are invariant for~$A$ and, moreover, $A$ acts as an identity operator in~$V_{\mathcal C \setminus \mathcal B}$:
\[
A = A|_{\mathcal B} \oplus \mathbb 1_{\mathcal C \setminus \mathcal B}.
\]
In this situation, we identify $A$ with its restriction~$A|_{\mathcal B}$.

In this paper, we consider the simplest case where each~$V_i$ is a \emph{one-dimen\-sional} space over a field~$F$, and has a \emph{fixed basis}. This means of course that we have identified each~$V_i$ with~$F$, and also our operators are naturally identified with matrices.

\subsection[$(2n+1)$-gon equation]{$\boldsymbol{(2n+1)}$-gon equation}\label{ss:gon}

The direct sum $(2n+1)$-gon equation considered here is defined on a vector space~$V_{(2n+1)\textrm{-gon}}$ of dimension $\binom{n}{2} = \frac{n(n+1)}{2}$ over a field~$F$. We think of it, in accordance with Subsection~\ref{ss:gn}, as a direct sum of one-dimen\-sional spaces~$V_i$, where $i$ takes values in the set
\[
\mathcal C = [n(n+1) / 2] \stackrel{\mathrm{def}}{=} \{ 1,2,\dots, n(n+1) / 2 \}.
\]
Explicitly, we write elements of~$V_{(2n+1)\textrm{-gon}}$ as $\frac{n(n+1)}{2}$-\emph{row} vectors, and the $(2n+1)$-gon equation is
\begin{equation}\label{(2n+1)-gon}
    A_{\mathcal{B}_1}^{(1)}A_{\mathcal{B}_3}^{(3)}\cdots  A_{\mathcal{B}_{2n+1}}^{(2n+1)}=A_{\mathcal{B}_{2n}}^{(2n)}A_{\mathcal{B}_{2n-2}}^{(2n-2)}\cdots A_{\mathcal{B}_2}^{(2)} .
\end{equation}
Here each $A^{(q)}$, \ $q=1,2,\ldots  ,2n+1$, is a linear operator acting nontrivially in the direct sum of~$n$ one-dimen\-sional spaces and thus identified, according to Subsection~\ref{ss:gn}, with an $n\times n$-matrix; as our space~$V_{(2n+1)\textrm{-gon}}$ consists of row vectors, the matrices act on the \emph{right}. Each $\mathcal{B}_q$ in~\eqref{(2n+1)-gon} is the set of numbers of spaces, or simply \emph{positions} in the row where $A^{(q)}$ acts nontrivially:
\begin{equation}\label{Bq}
\mathcal{B}_q=\{ b_{q,1},\ldots ,b_{q,n} \}.
\end{equation}
Below, we are going to specify these~$\mathcal{B}_q$ in a way that may seem somewhat roundabout, but is actually very convenient for our aims here.

\begin{convention}\label{c:o}
In this paper, we list the elements of all sets always in the \emph{increasing} order, if not explicitly stated otherwise. For instance, $b_{q,1} < \dots < b_{q,n}$ in~\eqref{Bq}.
\end{convention}

We introduce two sequences of pairs of numbers from $\{1,2,\dots,2n+1\}$ called \emph{initial} and \emph{final} sequence. Initial sequence consists of pairs (odd number, even number), taken in the lexicographic order, that is,
\begin{equation}\label{gon}
 \begin{aligned}
    (1,2), & (1,4),\ldots,(1,2n),\, \ldots,\, \\
   & \boldsymbol{(2k-1,2k),\ldots,(2k-1,2n)},\, 
   \ldots ,\, (2n-1,2n) .
 \end{aligned}
\end{equation}
Here we highlighted in bold a ``typical'' subsequence consisting of pairs containing $2k-1$. Similarly, final sequence consists of pairs (even number, odd number), taken also in the lexicographic order, that is,
\begin{equation}\label{rgon}
 \begin{aligned}
   (2,3), & (2,5),\ldots  (2,2n+1),\, \ldots ,\, \\
    & \boldsymbol{(2k,2k+1),\ldots,(2k,2n+1)},\, 
    \ldots ,\, (2n,2n+1) ,
 \end{aligned}
\end{equation}
where we highlighted in bold a ``typical'' subsequence consisting of pairs containing $2k$. Both sequences clearly have length~$\frac{n(n+1)}{2}$.

\textbf{By definition, }$\mathcal B_q$ consists of such positions~$b=b_{q,i}$ for which $q$ is present in the $b$'s pair at least in one of the sequences \eqref{gon} and~\eqref{rgon}. Clearly, there are exactly~$n$ such positions for any~$q=1,2,\dots,2n+1$. Explicit formulas for~$\mathcal B_q$ are given in Subsection~\ref{ss:Bcal}.

We introduce the following \emph{indexing rule} for input and output $n$-rows for matrices~$A^{(q)}$. For a given~$q$, define the set
\begin{equation}\label{Lq}
    L_q=\{1,2,\ldots,2n+1\}\setminus\{q\} \stackrel{\mathrm{def}}{=} \{p_1,p_2,\ldots,p_{2n}\},
\end{equation}
then the entries of the mentioned rows are indexed as follows:
\begin{equation}\label{gon-inds}
   \begin{pmatrix} u_{p_1,q} & u_{p_3,q} & \dots & u_{p_{2n-1},q} \end{pmatrix} A^{(q)} = \begin{pmatrix} u_{p_2,q} & u_{p_4,q} & \dots & u_{p_{2n},q} \end{pmatrix}.
\end{equation}
Here we identify if needed $(p_i,q)$ with~$(q,p_i)$, this cannot bring confusion.

\begin{proposition}\label{p:ind-rule}
The indexing rule~\eqref{gon-inds} is consistent with both sides of the $(2n+1)$-gon equation~\eqref{(2n+1)-gon}. Namely, take an input row vector for both sides of~\eqref{(2n+1)-gon} whose entries are indexed according to the sequence~\eqref{gon}:
\begin{equation}\label{ini-row}
\begin{pmatrix} u_{1,2} & u_{1,4} & \dots & u_{2n-1, 2n} \end{pmatrix}.
\end{equation}
Then, as we apply in turn the matrices~$A^{(q)}$ in either l.h.s.\ or r.h.s.\ of~\eqref{(2n+1)-gon} and change at each step the indexing of the corresponding $n$ entries according to~\eqref{gon-inds},
 \begin{itemize}
  \item[$\mathrm{(i)}$] at each step, there are exactly~$n$ index pairs in the input $\frac{n(n+1)}{2}$-row fitting~\eqref{gon-inds},
  \item[$\mathrm{(ii)}$] the final obtained sequence coincides with~\eqref{rgon}.
 \end{itemize}
\end{proposition}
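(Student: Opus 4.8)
The plan is to discard the operator/matrix language and track only the combinatorics of the index pairs: at every stage the $\tfrac{n(n+1)}{2}$-row carries a collection of unordered pairs from $\{1,\dots,2n+1\}$, and the relabelling \eqref{gon-inds} touches only the pairs that contain~$q$. So first I would unwind \eqref{gon-inds} into a uniform advance rule. Writing $L_q=\{p_1<\dots<p_{2n}\}$ as in \eqref{Lq}, one has $p_i=i$ for $i<q$ and $p_i=i+1$ for $i\ge q$; hence the odd-ranked elements $p_1,p_3,\dots,p_{2n-1}$ are exactly the numbers that are odd and less than $q$ or even and greater than $q$, while the even-ranked elements $p_2,p_4,\dots,p_{2n}$ are those even and less than $q$ or odd and greater than $q$. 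Thus $A^{(q)}$ fixes every pair not containing~$q$ and advances each admissible partner $p_{2j-1}$ of $q$ to the next element $p_{2j}$ of $L_q$; concretely a partner $r$ becomes $r+1$, the sole exception being $r=q-1$ (possible only when $q$ is even), which jumps over~$q$ to $q+1$.

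The core step is a per-pair trajectory computation. I would fix a pair of the initial sequence \eqref{gon}, namely $\{2i-1,2j\}$ with $1\le i\le j\le n$, and follow it through both prescribed orders. On the left-hand side of \eqref{(2n+1)-gon} (odd $q$ increasing) the pair is inert until $A^{(2i-1)}$ sends $\{2i-1,2j\}\mapsto\{2i-1,2j+1\}$, and then until $A^{(2j+1)}$ sends $\{2i-1,2j+1\}\mapsto\{2i,2j+1\}$, after which nothing meets it again. On the right-hand side (even $q$ decreasing) the same endpoint is reached via $A^{(2j)}$ and then $A^{(2i)}$, the diagonal case $i=j$ using precisely the skip-over-$q$ exception $\{2j-1,2j\}\mapsto\{2j,2j+1\}$. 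Either way the pair terminates at $\{2i,2j+1\}$, which is the general pair of the final sequence \eqref{rgon}.

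With the trajectories in hand both assertions follow. For (i), at the stage where $A^{(q)}$ is applied I read off from the trajectory description which pairs currently contain~$q$: for odd $q=2l-1$ these are the not-yet-moved pairs $\{2l-1,2j\}$ with $l\le j\le n$ together with the once-moved pairs $\{2i-1,2l-1\}$ with $1\le i\le l-1$, giving exactly the partners odd and less than $q$ or even and greater than $q$, and $(n-l+1)+(l-1)=n$ of them; the even case gives the same input set, again $n$ pairs. No output-type pair containing~$q$ is yet present, so the relabelling is well defined and exactly $n$ entries fit \eqref{gon-inds}. For (ii), the assignment $\{2i-1,2j\}\mapsto\{2i,2j+1\}$ is a bijection from the pairs of \eqref{gon} onto the pairs of \eqref{rgon} that preserves the lexicographic order on the index $(i,j)$; since \eqref{gon} and \eqref{rgon} are themselves lex-ordered in $(i,j)$ and the relabelling keeps each label at its own position, the $b$-th entry ends as the $b$-th pair of \eqref{rgon}, so the final sequence is \eqref{rgon} exactly.

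The main obstacle I expect is the bookkeeping rather than any single hard idea. One must verify the uniform advance rule across the two parities and isolate the one genuinely exceptional event, the jump $\{q-1,q\}\mapsto\{q,q+1\}$ for even $q$, which is exactly what makes the diagonal pairs $i=j$ behave correctly on the right-hand side. One must also secure the positional (not merely set-theoretic) matching asserted in (ii): that the $n$ input labels land at the positions $b_{q,1}<\dots<b_{q,n}$ in the order induced by $p_1<p_3<\dots<p_{2n-1}$, so that applying \eqref{gon-inds} never permutes labels between positions. Both are routine once the trajectory picture is in place, but they are where care is genuinely needed.
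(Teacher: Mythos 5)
Your proposal is correct and takes essentially the same route as the paper's own proof: part (ii) by tracking each pair's trajectory $\{2i-1,2j\}\mapsto\{2i-1,2j+1\}\mapsto\{2i,2j+1\}$ on the left and $\{2i-1,2j\}\mapsto\{2i,2j\}\mapsto\{2i,2j+1\}$ on the right, with the same exceptional single-step case $i=j$ handled by the jump over $q$, and part (i) by exhibiting, at the stage of $A^{(q)}$, exactly the $n$ pairs currently containing~$q$ with input-admissible partners. You merely execute the bookkeeping more explicitly than the paper's compressed ``and so on'' induction --- deriving the advance rule from~\eqref{Lq}, counting $(n-l+1)+(l-1)=n$, and noting the order-preserving bijection onto~\eqref{rgon} --- which fills in detail but introduces no new idea.
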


\begin{proof}
 \begin{itemize}
  \item[$\mathrm{(i)}$] Indeed, consider the l.h.s. for concreteness. There are of course exactly~$n$ index pairs in~\eqref{ini-row} fitting~$A^{(1)}$. After applying~$A^{(1)}$, there appear $u_{1,3}, u_{1,5}, \ldots $, so, in particular, there are now $n$ index pairs fitting~$A^{(3)}$, and so on.
  \item[$\mathrm{(ii)}$] In the l.h.s., an entry~$u_{2k-1,2l}$ is first replaced with~$u_{2k-1,2l+1}$ by~$A^{(2k-1)}$ and then with~$u_{2k,2l+1}$ by~$A^{(2l+1)}$, as required. In the r.h.s., an entry~$u_{2k-1,2l}$ is first replaced with~$u_{2k,2l}$ by~$A^{(2l)}$ and then with~$u_{2k,2l+1}$ by~$A^{(2k)}$, except for the case $k=l$, where only one matrix, $A^{(2k)}$, takes part in the play, and replaces~$u_{2k-1,2k}$ directly with~$u_{2k,2k+1}$.
 \end{itemize}
\end{proof}

\begin{example}
Figure~\ref{fig:pentagon} shows what happens with index pairs for $n=2$, that is, \emph{pentagon} equation.
\end{example}

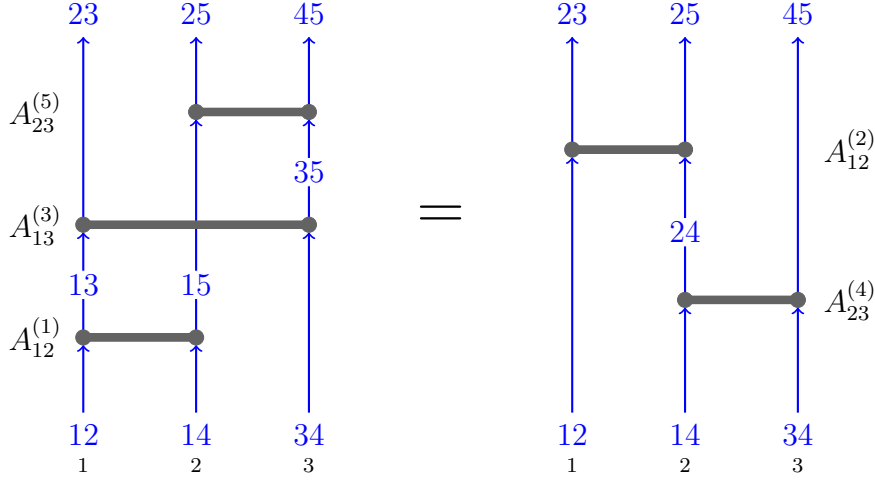
\begin{figure}[t]
 \begin{center}
  \begin{tikzpicture}

\draw[blue, thick, -> ] (0,0) -- (0,0.9) ;
\draw[blue, thick, -> ] (1.5,0) -- (1.5,0.9) ;
\draw[blue, thick, -> ] (3,0) -- (3,2.4) ;

\draw[blue, thick, -> ] (0,1) -- (0,2.4) ;
\draw[blue, thick, -> ] (1.5,1) -- (1.5,3.9) ;
\draw[blue, thick, -> ] (3,2.5) -- (3,3.9) ;

\draw[blue, thick, -> ] (0,2.5) -- (0,5) ;
\draw[blue, thick, -> ] (1.5,4) -- (1.5,5) ;
\draw[blue, thick, -> ] (3,4) -- (3,5) ;

\draw (0,1) node[black, anchor=east] {$A_{12}^{(1)}$ \hspace{ 1em }} ;
\filldraw[gray] (0,0.95) rectangle (1.5,1.05) ;
\filldraw[gray] (0,1) circle (0.1);
\filldraw[gray] (1.5,1) circle (0.1);

\draw (0,2.5) node[black, anchor=east] {$A_{13}^{(3)}$ \hspace{ 1em }} ;
\filldraw[gray] (0,2.45) rectangle (3,2.55) ;
\filldraw[gray] (0,2.5) circle (0.1);
\filldraw[gray] (3,2.5) circle (0.1);

\draw (0,4) node[black, anchor=east] {$A_{23}^{(5)}$ \hspace{ 1em }} ;
\filldraw[gray] (1.5,3.95) rectangle (3,4.05) ;
\filldraw[gray] (1.5,4) circle (0.1);
\filldraw[gray] (3,4) circle (0.1);

\filldraw[white] (0,1.7) circle (1ex);
\draw[blue] (0,1.7) node {$13$};

\filldraw[white] (1.5,1.7) circle (1ex);
\draw[blue] (1.5,1.7) node {$15$};

\filldraw[white] (3,3.2) circle (1ex);
\draw[blue] (3,3.2) node {$35$};

\draw (0,-0.3) node[blue] {$12$} ;
\draw (1.5,-0.3) node[blue] {$14$} ;
\draw (3,-0.3) node[blue] {$34$} ;

\draw (0,5.3) node[blue] {$23$} ;
\draw (1.5,5.3) node[blue] {$25$} ;
\draw (3,5.3) node[blue] {$45$} ;

\draw (0,-0.7) node {$\scriptstyle 1$} ;
\draw (1.5,-0.7) node {$\scriptstyle 2$} ;
\draw (3,-0.7) node {$\scriptstyle 3$} ;

\draw (4.75,2.6) node {\huge =};

\draw[blue, thick, -> ] (6.5,0) -- (6.5,3.4) ;
\draw[blue, thick, -> ] (8,0) -- (8,1.4) ;
\draw[blue, thick, -> ] (9.5,0) -- (9.5,1.4) ;

\draw[blue, thick, -> ] (6.5,3.5) -- (6.5,5) ;
\draw[blue, thick, -> ] (8,1.5) -- (8,3.4) ;
\draw[blue, thick, -> ] (8,3.5) -- (8,5) ;
\draw[blue, thick, -> ] (9.5,1.5) -- (9.5,5) ;

\draw (9.5,1.5) node[black, anchor=west] {\hspace{ .15em } $A_{23}^{(4)}$} ;
\filldraw[gray] (8,1.45) rectangle (9.5,1.55) ;
\filldraw[gray] (8,1.5) circle (0.1);
\filldraw[gray] (9.5,1.5) circle (0.1);

\draw (9.5,3.5) node[black, anchor=west] {\hspace{ .15em } $A_{12}^{(2)}$} ;
\filldraw[gray] (6.5,3.45) rectangle (8,3.55) ;
\filldraw[gray] (6.5,3.5) circle (0.1);
\filldraw[gray] (8,3.5) circle (0.1);

\filldraw[white] (8,2.4) circle (1ex);
\draw[blue] (8,2.4) node {$24$};

\draw (6.5,-0.3) node[blue] {$12$} ;
\draw (8,-0.3) node[blue] {$14$} ;
\draw (9.5,-0.3) node[blue] {$34$} ;

\draw (6.5,5.3) node[blue] {$23$} ;
\draw (8,5.3) node[blue] {$25$} ;
\draw (9.5,5.3) node[blue] {$45$} ;

\draw (6.5,-0.7) node {$\scriptstyle 1$} ;
\draw (8,-0.7) node {$\scriptstyle 2$} ;
\draw (9.5,-0.7) node {$\scriptstyle 3$} ;

\end{tikzpicture}
 \end{center}
 \caption{Pentagon equation. Action of l.h.s.\ or r.h.s.\ on row vectors corresponds to going from the bottom to the top in this figure. The left-hand side is to be compared with the blue part of Figure~\ref{fig:4-simplex}}
 \label{fig:pentagon}
\end{figure}

If we take the inverses of both sides of~\eqref{(2n+1)-gon}, and interchange its l.h.s.\ and r.h.s., we get equation
\begin{equation}\label{gon-inv}
    B_{\mathcal{B}_2}^{(2)} B_{\mathcal{B}_{4}}^{(4)}\cdots B_{\mathcal{B}_{2n}}^{(2n)} = B_{\mathcal{B}_{2n+1}}^{(2n+1)} B_{\mathcal{B}_{2n-1}}^{(2n-1)}\cdots B_{\mathcal{B}_1}^{(1)} ,
\end{equation}
where
\begin{equation}\label{BA}
    B^{(q)} = \left( A^{(q)} \right)^{-1}.
\end{equation}
The natural indexation of row entries for~\eqref{gon-inv} is obtained from~\eqref{gon-inds} by interchanging the rows in its l.h.s.\ and r.h.s.:
\begin{equation}\label{g-i-B}
   \begin{pmatrix} u_{p_2,q} & u_{p_4,q} & \dots & u_{p_{2n},q} \end{pmatrix} B^{(q)} = \begin{pmatrix} u_{p_1,q} & u_{p_3,q} & \dots & u_{p_{2n-1},q} \end{pmatrix} .
\end{equation}

As we will see, having solutions to~\eqref{(2n+1)-gon} \emph{together} with solutions to~\eqref{gon-inv} turns out to be important for constructing solutions to $2n$-simplex equation. We would like to note here that \emph{transposed} matrices $B^{(q)} = ( A^{(q)} )^{\mathrm T}$ also satisfy~\eqref{gon-inv}. In this paper, however, we will be working with the case~\eqref{BA}.

\subsection[$2n$-simplex equation]{$\boldsymbol{2n}$-simplex equation}\label{ss:2n-simplex}

The direct sum $(2n)$-simplex has the form
\begin{equation}\label{2n-simplex}
    R_{\mathcal{A}_1}^{(1)}R_{\mathcal{A}_2}^{(2)}\cdots  R_{\mathcal{A}_{2n+1}}^{(2n+1)}=R_{\mathcal{A}_{2n+1}}^{(2n+1)}\cdots  R_{\mathcal{A}_2}^{(2)}R_{\mathcal{A}_1}^{(1)}
\end{equation}
Here $R^{(q)}$, $q=1,2,\ldots  ,2n+1$ are $(2n)\times (2n)$-matrices. The equation is defined on a vector space of dimension $\binom{2n+1}{2} = n(2n+1)$ consisting of rows which we write as $\begin{pmatrix} u_{1,2} & u_{1,3} & \dots & u_{2n, 2n+1}\end{pmatrix}$. The indices here go in lexicographic order, and in a more detailed form they are
\begin{equation}
 \begin{aligned}
    (1,2), & (1,3),\ldots  ,(1,2n+1),\,\ldots  ,\, \\
    & \boldsymbol{(k,k+1),\ldots,(k,2n+1)},\,\ldots ,\,(2n,2n+1) ,     \label{sim}
 \end{aligned}
\end{equation}
where we again highlighted a ``typical subsequence'' like we did in \eqref{gon} and~\eqref{rgon}. In contrast with the $(2n+1)$-gon equation of Subsection~\ref{ss:gon}, the double indexing~\eqref{sim} is the same for the input and output positions where each matrix~$R^{(q)}$ acts. Namely, these are the positions where $q$ is met in the double index; we can write it in the following way:
\begin{equation}\label{row-Rq}
    \begin{pmatrix} u_{p_1,q} & u_{p_2,q} & \dots & u_{p_{2n},q}\end{pmatrix} R^{(q)} = \begin{pmatrix} v_{p_1,q} & v_{p_2,q} & \dots & v_{p_{2n},q}\end{pmatrix},
\end{equation}
identifying if needed $(p_i,q)$ with~$(q,p_i)$.

Sets $\mathcal{A}_q$ consist, correspondingly, of the positions of double indices in~\eqref{row-Rq} in the sequence~\eqref{sim}. We denote these positions as follows:
\begin{equation}\label{Aq}
\mathcal{A}_q = \{a_{q,1},\ldots  ,a_{q,2n} \},
\end{equation}
similarly to~\eqref{Bq}. Concerning explicit formulas for~$a_{q,i}$, see Proposition~\ref{p:A} below.

\subsection{Positions where matrices act nontrivially}\label{ss:Bcal}

\begin{proposition}\label{p:A}
Positions where matrices act nontrivially in the $2n$-simplex equation are given by
 \begin{equation}\label{A}
    a_{k,j}=\left\{
    \begin{array}{ccc}
        \dfrac{(4n-k)(k-1)}{2}+j & \text{ for} & j\geq k , \\[1.5ex]
        a_{j,k-1} & \text{ for} & j<k .
    \end{array} \right.
 \end{equation}
\end{proposition}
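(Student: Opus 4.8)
The plan is to reduce everything to one explicit formula for the lexicographic position of a pair in the sequence~\eqref{sim}, and then to read off the two branches of~\eqref{A} by sorting the pairs that contain a fixed index~$k$.

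First I would compute, for an arbitrary pair $(i,j)$ with $1\le i<j\le 2n+1$, its position $\mathrm{pos}(i,j)$ in~\eqref{sim}. Counting the pairs whose first entry is strictly smaller than~$i$, of which there are $\sum_{l=1}^{i-1}(2n+1-l)$, and adding the offset $j-i$ inside the block of pairs beginning with~$i$, one gets
\[
\mathrm{pos}(i,j)=\frac{(i-1)(4n+2-i)}{2}+(j-i).
\]
This is the only genuine computation, and it is elementary; it is worth checking it against the list for $n=2$ before proceeding.

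Next I would describe~$\mathcal A_k$. The pairs containing~$k$ are precisely $(1,k),\dots,(k-1,k)$, with $k$ as the second entry, together with $(k,k+1),\dots,(k,2n+1)$, with $k$ as the first entry; these are $2n$ pairs in total. In the lexicographic order each pair of the first group precedes each pair of the second, and within each group the position increases. Hence, listing $\mathcal A_k$ increasingly as prescribed by Convention~\ref{c:o}, I obtain $a_{k,j}=\mathrm{pos}(j,k)$ for $j=1,\dots,k-1$ and $a_{k,j}=\mathrm{pos}(k,j+1)$ for $j=k,\dots,2n$.

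It then remains to match these with~\eqref{A}. For $j\ge k$ I would substitute $\mathrm{pos}(k,j+1)=\frac{(k-1)(4n+2-k)}{2}+(j+1-k)$ and check by a one-line simplification that it equals $\frac{(4n-k)(k-1)}{2}+j$. For $j<k$ the claim is the recursive-looking identity $a_{k,j}=a_{j,k-1}$; since $j\le k-1$, the entry $a_{j,k-1}$ falls under the \emph{first} branch of~\eqref{A} and therefore equals $\mathrm{pos}(j,k)$, which is exactly the value of $a_{k,j}$ found above. So the recursion simply reflects the symmetry $(j,k)\leftrightarrow(k,j)$ of the double index and closes at once. The arithmetic is routine; the only point demanding care, and the main (minor) obstacle I anticipate, is the bookkeeping of the two groups of pairs and the check that $a_{j,k-1}$ always lands in the first branch, so that the second branch of~\eqref{A} is well defined.
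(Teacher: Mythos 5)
Your proposal is correct and takes essentially the same route as the paper's own proof: your explicit $\mathrm{pos}(i,j)=\frac{(i-1)(4n+2-i)}{2}+(j-i)$ is just the paper's count $a_{k,j}=r_k+j$ with $r_k=\frac{(4n-k)(k-1)}{2}$ written for a general pair, and your treatment of $j<k$ is the paper's observation that the pair $(j,k)$ in~\eqref{sim} occupies both positions $a_{k,j}$ and $a_{j,k-1}$. Your additional verification that $a_{j,k-1}$ always falls under the first branch (since $k-1\ge j$), so that the recursion is well defined, is a detail the paper leaves implicit.
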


Formula~\eqref{A} first appeared, without proof, in~\cite[p.~16]{DM-H}.

\begin{proof}
 Case $j\geq k$\,. For such~$j$, as one can readily see from~\eqref{sim}, $a_{k,j}=r_k+j$, where
\begin{align*}
 & r_1=0,\qquad r_2=2n-1,\qquad r_3=r_2+(2n-2),\qquad \dots, \\
 & r_k=(2n-1)+(2n-2)+\dots+(2n-k+1) = \dfrac{(4n-k)(k-1)}{2}.
\end{align*}

 Case $j<k$\,. Indeed, in this case pair $j,k$ in~\eqref{sim} corresponds to both $a_{k,j}$ and~$a_{j,k-1}$; see also Example~\ref{x:jk} below.
\end{proof}

\begin{example}\label{x:jk}
To illustrate Proposition~\ref{p:A}, take $n=2$, that is, the 4-simplex equation. Here are the rows consisting of index pairs corresponding to~$\mathcal A_k$; those pairs where $j<k$ are highlighted with color:
\[
\begin{array}{c|cccc}
 & j{=}1 & j{=}2 & j{=}3 & j{=}4 \\ \hline
k{=}1 & 12 & 13 & 14 & 15 \\
k{=}2 & {\color{darkgreen} 12} & 23 & 24 & 25 \\
k{=}3 & {\color{darkgreen} 13} & {\color{darkgreen} 23} & 34 & 35 \\
k{=}4 & {\color{darkgreen} 14} & {\color{darkgreen} 24} & {\color{darkgreen} 34} & 45 \\
k{=}5 & {\color{darkgreen} 15} & {\color{darkgreen} 25} & {\color{darkgreen} 35} & {\color{darkgreen} 45}
\end{array}
\]
\end{example}

We now introduce $\boldsymbol{n}$\textbf{-simplex} equation, which turns out to be interesting because of its connection with the $\boldsymbol{(2n+1)}$\textbf{-gon}. All the content of the previous Subsection~\ref{ss:2n-simplex} and Proposition~\ref{p:A} remain perfectly valid if we change everywhere $2n$ to just~$n$. To distinguish analogues of sets~\eqref{Aq} from the ``$2n$''~case, we add a bar to letters $\mathcal A$ and~$a$ in the ``$n$''~case:
\begin{align}
\bar{\mathcal A}_k & = \{\bar a_{k,1},\ldots , \bar a_{k,n} \} , \label{Abar} \\[1ex]
    \bar a_{k,j} & = \left\{
    \begin{array}{ccc}
        \dfrac{(2n-k)(k-1)}{2}+j & \text{ for} & j\geq k , \\[1.5ex]
        \bar a_{j,k-1} & \text{ for} & j<k .
    \end{array} \right.
\end{align}

\begin{proposition}\label{p:B}
Positions where matrices act nontrivially in the $(2n+1)$-gon equation are given by
 \begin{align}
   & \mathcal{B}_{2k-1}=\bar{\mathcal A}_k, \label{B-odd} \\
   & \mathcal{B}_{2k} =\bar{\mathcal A}_k+b_k , \qquad
    \text{with} \quad b_k=\left(\beta _{k,1},\ldots,\beta _{k,n}\right),\qquad
     \beta _{k,j}=\left\{
        \begin{array}{cc}
             0, & j\geq k, \\
             1, & j<k .
        \end{array}
   \right. \label{B-even}
 \end{align}
\end{proposition}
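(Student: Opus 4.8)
The plan is to reduce both assertions to one explicit counting formula and then match the two sides slot by slot. Write $\mathrm{pos}(a,b)$ for the position of a pair $(a,b)$, $a<b$, in the lexicographically ordered list of pairs from $\{1,\dots,n+1\}$, i.e.\ the list~\eqref{sim} with $2n$ replaced by~$n$. Summing block lengths (equivalently, reading off the $n$-analogue of Proposition~\ref{p:A}) gives
\[
\mathrm{pos}(a,b)=\frac{(2n+2-a)(a-1)}{2}+(b-a),
\]
and directly from the definition of $\bar{\mathcal A}_k$ one has $\bar a_{k,j}=\mathrm{pos}(k,j+1)$ for $j\geq k$ and $\bar a_{k,j}=\bar a_{j,k-1}=\mathrm{pos}(j,k)$ for $j<k$. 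Thus $\bar{\mathcal A}_k$ is nothing but the set of positions of those pairs in~\eqref{sim} that contain the number~$k$.

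Next I would set up the dictionary between the two $(2n+1)$-gon sequences and these positions. A count of the block lengths in~\eqref{gon} and~\eqref{rgon} shows that one and the same slot of the row~\eqref{ini-row} is occupied by the pair $(2a-1,2b)$ in the initial sequence~\eqref{gon} and by $(2a,2b+1)$ in the final sequence~\eqref{rgon}, for some $a\leq b$, and that this slot number equals $\mathrm{pos}(a,b+1)$. In short, the indexing of the $(2n+1)$-gon row coincides with the $n$-simplex indexing via $(a,b)\mapsto\mathrm{pos}(a,b+1)$, each slot carrying one ``odd'' pair from~\eqref{gon} and one ``even'' pair from~\eqref{rgon}.

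With this dictionary the odd case~\eqref{B-odd} is immediate: $2k-1$ occurs in~\eqref{gon} only as the first entry of $(2k-1,2b)$, $b=k,\dots,n$, whose slots $\mathrm{pos}(k,b+1)=\bar a_{k,b}$ make up the part $j\geq k$ of $\bar{\mathcal A}_k$; and it occurs in~\eqref{rgon} only as the second entry of $(2a,2k-1)$, $a=1,\dots,k-1$, whose slots $\mathrm{pos}(a,k)=\bar a_{k,a}$ make up the part $j<k$. Their union is $\bar{\mathcal A}_k$. The even case~\eqref{B-even} runs the same way, but with a shift that is the only delicate point. Here $2k$ occurs in~\eqref{rgon} as the first entry of $(2k,2b+1)$, $b=k,\dots,n$, again giving the slots $\bar a_{k,b}$ with no shift (the part $\beta_{k,j}=0$); and it occurs in~\eqref{gon} as the second entry of $(2a-1,2k)$, $a=1,\dots,k$. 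For $a=k$ the pair $(2k-1,2k)$ falls on the already counted slot $\bar a_{k,k}$, while for $a<k$ the one-line computation $(\text{slot of }(2a-1,2k))=\mathrm{pos}(a,k)+1=\bar a_{k,a}+1$ produces exactly the shifted positions, matching $\beta_{k,j}=1$ for $j<k$. Assembling the two contributions gives $\bar{\mathcal A}_k+b_k$, proving~\eqref{B-even}.

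The substance of the argument is therefore purely combinatorial bookkeeping: keeping straight, for each parity of $q$, which sequence supplies the ``$j\geq k$'' half of $\bar{\mathcal A}_k$ and which supplies the ``$j<k$'' half, and checking that the unit shift $b_k$ arises only in the even case---precisely because there $2k$ sits as a \emph{second} entry in~\eqref{gon}, one step past the $n$-simplex slot of $(a,k)$. I expect this shift to be the main obstacle to a clean write-up; everything else follows from the single formula for $\mathrm{pos}$ and from Convention~\ref{c:o}.
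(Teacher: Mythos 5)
Your proof is correct and takes essentially the same route as the paper's: both arguments locate the occurrences of $2k-1$ and $2k$ in the sequences \eqref{gon} and \eqref{rgon} slot by slot and translate them into positions of the $n$-simplex list \eqref{l}, your dictionary assigning $(2a-1,2b)$ and $(2a,2b+1)$ to the common slot $\mathrm{pos}(a,b+1)$ being exactly the content of the paper's hybrid sequences \eqref{2l-1} and \eqref{2l}. The only difference is presentational: via the explicit formula for $\mathrm{pos}$ and the identity $\mathrm{pos}(a,k+1)=\mathrm{pos}(a,k)+1$ you spell out the unit shift in the even case, which the paper leaves as ``a simple analysis'' of \eqref{2l} versus \eqref{l}.
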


\begin{proof}
It follows from the analysis made in the proof of Proposition~\ref{p:ind-rule} that $\mathcal B_{2k-1}$ can be described in terms of a \emph{fixed}---not changing with the action of any~$A^{(2k-1)}$---sequence of double indices. Namely, take, for each pair, the first member from the corresponding pair in~\eqref{gon}, while the second from the corresponding pair in~\eqref{rgon}. We obtain the following sequence involving \emph{only odd} numbers (where we highlight a ``typical subsequence'' like we did earlier):
\begin{equation}\label{2l-1}
 \begin{aligned}
  (1,3), & (1,5), \ldots, (1,2n+1), \ldots, \\
  & \boldsymbol{(2l-1, 2l+1), (2l-1,2l+3), \ldots, (2l-1, 2n+1)}, \\
  & \qquad \ldots, (2n-1, 2n+1),
 \end{aligned}
\end{equation}
and $\mathcal B_{2k-1}$ consists, as can be readily seen, exactly of such positions where $2k-1$ is a member of the corresponding pair.

Compare now~\eqref{2l-1} with the analogue of~\eqref{sim} for the $n$-simplex:
\begin{equation}\label{l}
 \begin{aligned}
  (1,2), & (1,3), \ldots, (1,n+1), \ldots, \\
  & \boldsymbol{(l,l+1), (l,l+2), \ldots, (l, n+1)}, \ldots, (n, n+1).
 \end{aligned}
\end{equation}
Equality~\eqref{B-odd} is now clear from the fact that for any number~$m$ entering in~\eqref{l}, there is~$(2m-1)$ at the corresponding place in~\eqref{2l-1}.

Similarly, $\mathcal B_{2k}$ can be described as consisting of such positions where $2k$ is a member of the pair in the following fixed---not changing with the action of any~$A^{(2k)}$---sequence: take, for each pair, the first member from the corresponding pair in~\eqref{rgon}, while the second from the corresponding pair in~\eqref{gon} (so, we write a pair of coinciding numbers if these two coincide). We obtain the following sequence involving \emph{only even} numbers (we highlight again a ``typical subsequence''):
\begin{equation}\label{2l}
 \begin{aligned}
  (2,2), & (2,4), \ldots, (2,2n), \ldots, \\
  & \boldsymbol{(2l,2l), (2l,2l+2), \ldots, (2l, 2n)}, \ldots, (2n, 2n),
 \end{aligned}
\end{equation}
Comparing \eqref{2l} with~\eqref{l}, one arrives, after a simple analysis, at equality~\eqref{B-even}.
\end{proof}

\subsection{Three-color decomposition}

As we have already said, each matrix~$R^{(q)}$ from the $2n$-simplex equation~\eqref{2n-simplex} acts nontrivially on its own $2n$-row, see~\eqref{row-Rq}. Consider now one special type of $R$-matrices, for which the \emph{even} entries of the output row (r.h.s.\ of~\eqref{row-Rq}) depend \emph{only on odd} entries of the input row (in the l.h.s.\ of~\eqref{row-Rq}), and vice versa. That is, we can represent in this case the dependence~\eqref{row-Rq} as
\begin{equation}\label{AB}
 \begin{aligned}
   \begin{pmatrix} u_{p_1,q} & u_{p_3,q} & \dots & u_{p_{2n-1},q} \end{pmatrix} A^{(q)} = \begin{pmatrix} v_{p_2,q} & v_{p_4,q} & \dots & v_{p_{2n},q} \end{pmatrix}, \\
   \begin{pmatrix} u_{p_2,q} & u_{p_4,q} & \dots & u_{p_{2n},q} \end{pmatrix} B^{(q)} = \begin{pmatrix} v_{p_1,q} & v_{p_3,q} & \dots & v_{p_{2n-1},q} \end{pmatrix} ,
 \end{aligned}
\end{equation}
where $A^{(q)}$ and~$B^{(q)}$ are the corresponding submatrices of~$R^{(q)}$ (and all entries of~$R^{(q)}$ outside $A^{(q)}$ and~$B^{(q)}$ are zero). Of course we called them $A^{(q)}$ and~$B^{(q)}$ intentionally, because the indexation in~\eqref{AB} coincides exactly with \eqref{gon-inds} and~\eqref{g-i-B}.

Such matrices can be said to admit a \emph{factorization} in $A^{(q)}$, $B^{(q)}$, and permutation matrices~$P_{i,j}$, because \eqref{AB} is equivalent to
\begin{equation}\label{RABP}
 \begin{aligned}
    R^{(q)} & = A_{1,3,\ldots ,2n-1}^{(q)}\, B_{2,4,\ldots ,2n}^{(q)}\, P_{1,2}P_{3,4} \cdots  P_{2n-1,2n} \\
    & = A_{1,3,\ldots ,2n-1}^{(q)}\, P_{1,2}P_{3,4} \cdots  P_{2n-1,2n} B_{1,3,\ldots ,2n-1}^{(q)} \, , \qquad \quad
    P=\left(\begin{array}{cc}0 & 1 \\1 & 0 \end{array}\right).
 \end{aligned}
\end{equation}
Subscripts in~\eqref{RABP} correspond to rows and columns of~$R^{(q)}$ viewed as a $(2n\times 2n)$-matrix.

We now assign colors---blue, red and green---to each entry in the initial $n(2n+\nobreak 1)$-row (on which both l.h.s.\ and r.h.s.\ of~\eqref{2n-simplex} act), as well as in all the middle rows and the final row, according to the following rules. In the initial row, the blue color is assigned to entries in positions, or simply ``to positions'',~\eqref{gon}; the red color is assigned to positions~\eqref{rgon}; and the green color to all the remaining positions. Then, taking either l.h.s.\ or r.h.s., we propagate the colors through each~$R^{(q)}$ in such way that the entries in the r.h.s.\ of~\eqref{AB} acquire the same color as those in the corresponding l.h.s.\ of~\eqref{AB}. It is important to note that this process goes ahead without obstacles or contradictions, and this is because the ``blue sector'' occupies exactly the places indexed in the same way as we did for the $(2n+1)$-gon equation~\eqref{(2n+1)-gon}, while the ``red sector''---indexed in the same way as for~\eqref{gon-inv}; the rest of the places remain for the green sector.

We now sum it up as the following proposition.

\begin{proposition}\label{p:d}
For $R$-matrices admitting factorization of type~\eqref{RABP}, the $2n$-simplex equation can be decomposed into three independent parts: ``blue'', ``red'' and ``green'', the blue part being the $(2n+1)$-gon equation~\eqref{(2n+1)-gon}, while the red part---the inverse $(2n+1)$-gon~\eqref{gon-inv} equation.   \qed
\end{proposition}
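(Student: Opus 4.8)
The plan is to follow the three colors as they are carried by the individual row entries through the action of either side of~\eqref{2n-simplex}, and to verify that the blue entries reproduce the $(2n+1)$-gon~\eqref{(2n+1)-gon}, the red ones its inverse~\eqref{gon-inv}, and the green ones a third, independent equation. Everything hinges on one combinatorial fact, which I will call \emph{monochromaticity}: at the instant a single $R^{(q)}$ is applied, the $n$ entries fed into its submatrix $A^{(q)}$ (the odd positions in~\eqref{AB}) all carry one and the same color, while the $n$ entries fed into $B^{(q)}$ (the even positions) all carry one and the same, but different, color. Granting this, the factorization~\eqref{RABP} makes $R^{(q)}$ split without interaction between sectors: $A^{(q)}$ carries the color occupying its odd positions onto the even output positions exactly by the rule~\eqref{gon-inds}, $B^{(q)}$ carries the color occupying the even positions onto the odd output positions by~\eqref{g-i-B}, and the remaining third color does not appear in this block at all.

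First I would prove monochromaticity by tracking, stage by stage, which color sits on which position. In the initial row blue occupies the positions~\eqref{gon}, red the positions~\eqref{rgon}, and green the remaining pairs (those formed of two odd or two even numbers). Re-running on either side the analysis from the proof of Proposition~\ref{p:ind-rule}, the blue entries occupy at each intermediate stage exactly the positions of a legitimate $(2n+1)$-gon indexing, evolving from~\eqref{gon} to~\eqref{rgon}; symmetrically the red entries occupy the positions of the inverse indexing, evolving from~\eqref{rgon} back to~\eqref{gon}; and green fills the complement throughout. As these three position-sets stay pairwise disjoint and jointly exhaustive, the $2n$ positions of the block acted on by $R^{(q)}$ split, at the moment of action, into exactly two colors, one on the odd and one on the even positions---which is monochromaticity. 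I expect this combinatorial bookkeeping to be the one genuinely laborious step; once the position-sets are pinned down, the rest is formal.

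With monochromaticity established I would read off the subsequence of matrices that each color actually meets. On the left-hand side of~\eqref{2n-simplex}, applied in the order $q=1,\dots,2n+1$, the blue color is transformed by $A^{(q)}$ for odd $q$ in increasing order, giving $A^{(1)}A^{(3)}\cdots A^{(2n+1)}$; on the right-hand side, applied in the order $q=2n+1,\dots,1$, it is transformed by $A^{(q)}$ for even $q$ in decreasing order, giving $A^{(2n)}A^{(2n-2)}\cdots A^{(2)}$. Since blue starts at~\eqref{gon} and is moved by~\eqref{gon-inds}, Proposition~\ref{p:ind-rule} guarantees that both products send~\eqref{gon} to~\eqref{rgon}, so equating them is precisely the $(2n+1)$-gon equation~\eqref{(2n+1)-gon}. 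The identical argument for red, now through~\eqref{g-i-B}, produces $B^{(2)}B^{(4)}\cdots B^{(2n)}$ on the left and $B^{(2n+1)}B^{(2n-1)}\cdots B^{(1)}$ on the right, which is exactly the inverse $(2n+1)$-gon equation~\eqref{gon-inv}.

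Finally I would note that, since the three sectors occupy disjoint positions at every stage, the single matrix identity~\eqref{2n-simplex} is equivalent to the conjunction of its restrictions to the three sectors. The blue restriction is~\eqref{(2n+1)-gon} and the red one is~\eqref{gon-inv}, as shown; the green restriction---on the left the alternating product $B^{(1)}A^{(2)}B^{(3)}A^{(4)}\cdots$ and on the right its reversal---is the third, independent ``consistency'' equation. This establishes the asserted three-color decomposition.
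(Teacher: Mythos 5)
Your proposal is correct and follows essentially the same route as the paper: what the paper offers as justification (in the paragraph preceding the proposition, whence the immediate \qed) is precisely your monochromaticity observation---the color propagation proceeds without conflict because the blue entries occupy at every stage the positions of the $(2n+1)$-gon indexing tracked in Proposition~\ref{p:ind-rule}, the red ones those of the inverse indexing~\eqref{g-i-B}, and green the complement, after which the factorization~\eqref{RABP} splits each $R^{(q)}$ into sector-preserving blocks and the three equations separate. One peripheral slip worth flagging: the green part of the right-hand side is not the literal reversal of $B^{(1)}A^{(2)}B^{(3)}A^{(4)}\cdots$ but its reversal with $A$ and $B$ interchanged, i.e.\ $A^{(2n+1)}B^{(2n)}A^{(2n-1)}\cdots$ as one sees in~\eqref{4s-decomp}; this does not affect the proposition, which asserts nothing about the green part's explicit form.
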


Thus, blue part involves only matrices~$A^{(q)}$, red part---only~$B^{(q)}$, and the green part, as one can see for instance from Example~\ref{x:d} below, is a mixed equation for both $A^{(q)}$ and $B^{(q)}$.

Some important properties of the green sector deserve a separate proposition.

\begin{proposition}\label{p:g}
 \hspace{0em} 
 \begin{itemize}\itemsep 0pt
  \item[$\mathrm{(i)}$] The initial and final positions in the green sector are the same, namely having both indices either odd or even:
   \begin{equation}\label{gif}
    \begin{aligned}
     & (2j,2k), \quad 1\le j<k \le n \\
     & \text{ or \ } (2j+1,2k+1), \quad 0\le j<k \le n.
    \end{aligned}
   \end{equation}
  \item[$\mathrm{(ii)}$] The inner (not initial or final) positions in the green sector all have one index odd and the other even, that is, belong to one of the sequences \eqref{gon} and~\eqref{rgon}. Moreover, any position in \eqref{gon} and~\eqref{rgon} is met exactly one time in the green sector.
 \end{itemize}
\end{proposition}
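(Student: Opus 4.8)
The backbone of the argument is a purely combinatorial description of how one occupied position migrates under a single factor $R^{(q)}$, after which the green trajectories can be classified directly. The plan is to first extract, from the factorization~\eqref{RABP} (equivalently~\eqref{AB}), the position-transition map, and then to trace it.

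\medskip

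\noindent\textbf{The transition map.} Reading~\eqref{RABP}, the permutation factor $P_{1,2}P_{3,4}\cdots P_{2n-1,2n}$ swaps the coordinates of $R^{(q)}$ in consecutive pairs, i.e.\ rank $2i-1\leftrightarrow 2i$, while $A^{(q)}$ and $B^{(q)}$ only mix \emph{values} within the odd-rank and even-rank streams and do not alter which position receives which stream. Hence, \emph{regardless of color}, an entry sitting at the position indexed $\{x,q\}$ --- occupying rank $r$ in the ordered set $L_q$ of~\eqref{Lq} --- is carried to the position at the partner rank $r\pm 1$. Converting ranks back to indices through $L_q$ gives an explicit index map $x\mapsto x'$, which I would record together with its few boundary cases (when $x=q\pm 1$, so that the removed value $q$ interferes with the rank/index conversion). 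This is exactly the rule already behind Proposition~\ref{p:ind-rule} and its inverse-gon counterpart~\eqref{g-i-B}, now applied to the green stream as well.

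\medskip

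\noindent\textbf{Part (i).} Here the quickest route is complementarity. Proposition~\ref{p:ind-rule} tells us the blue entries travel from the positions~\eqref{gon} to the positions~\eqref{rgon}; the same analysis applied to~\eqref{gon-inv} with the indexing~\eqref{g-i-B} shows the red entries travel from~\eqref{rgon} back to~\eqref{gon}. Thus in the final row every mixed-parity position --- every member of~\eqref{gon} or~\eqref{rgon} --- is occupied by a blue or a red entry, so the green entries occupy precisely the complementary same-parity positions~\eqref{gif}. Since green also starts on~\eqref{gif}, the initial and final green position-sets coincide, which is~(i). (Alternatively, (i) drops out of the explicit endpoint list produced below.)

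\medskip

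\noindent\textbf{Part (ii).} I would feed each same-parity starting position $\{a,b\}$, $a<b$, into the transition map and follow it, keeping in mind that $R^{(1)},\dots,R^{(2n+1)}$ act \emph{in this order}, so a particle moves only when the running index $q$ equals one of its two current entries, and the next such $q$ must be located at each stage. This splits the green starts into three families --- both indices odd with $b=a+2$, both odd with $b>a+2$, and both even --- and each family yields a short trajectory of two or three positions, whose non-endpoint positions are read off at once and are all seen to be mixed-parity, i.e.\ to lie in~\eqref{gon} or~\eqref{rgon}. A final bookkeeping step matches the three families of inner positions against~\eqref{gon} and~\eqref{rgon} according to whether the two indices of a mixed position are adjacent, and shows this correspondence is a bijection; this is the ``exactly one time'' claim, and the endpoints of the same three families recover all of~\eqref{gif}, reproving~(i).

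\medskip

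The main obstacle I anticipate is the disciplined handling of the transition map's boundary cases and, more subtly, the \emph{timing}: because $q$ only increases, whether a trajectory terminates or continues depends on whether its current indices still exceed $q$, and this is exactly what separates the ``consecutive'' from the ``non-consecutive'' odd family. Stating cleanly the inverse-gon analogue of Proposition~\ref{p:ind-rule} used for the shortcut in~(i) is routine but should be made explicit; the self-contained alternative is to let~(i) follow from the endpoint computation in~(ii).
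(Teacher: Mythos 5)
Your proof of (i) coincides with the paper's: the paper also gets it by pure complementarity --- positions \eqref{gon} are initial and \eqref{rgon} final for the blue sector, the reverse for red, and green is whatever remains --- so there is nothing to add there. For (ii), however, you take a genuinely different route. The paper never writes down a transition map and never traces a trajectory; it argues per \emph{position}: each position $(j,k)$ is altered by exactly two matrices, $R^{(j)}$ and $R^{(k)}$, so it carries exactly one initial, one inner and one final value; no $R^{(q)}$ lies wholly in a single color sector, so the two parity blocks of each $R^{(q)}$ carry different colors and the color at every position flips under each of its two actions; hence only four color histories are possible --- blue--green--red, red--green--blue, green--blue--green, green--red--green --- and (ii) drops out at once: the unique inner value at each mixed-parity position is green, and a same-parity position is never green in the middle. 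Your enumeration, by contrast, buys explicitness: it actually exhibits the green trajectories (the two-inner loops at the adjacent-odd starts $(2k-1,2k+1)$, passing through $(2k-1,2k)$ and $(2k,2k+1)$; the one-inner paths $(2j+1,2k+1)\to(2j+1,2k)\to(2j+2,2k)$ for $k>j+1$, and $(2j,2k)\to(2j,2k+1)\to(2j-1,2k+1)$), which is exactly what one needs to draw diagrams like Figure~\ref{fig:phis}; I have checked that your three families, the timing criterion separating them, and the resulting bijection with \eqref{gon} and \eqref{rgon} (off-diagonal pairs from the two one-inner families, the $n$ ``diagonal'' pairs of each sequence from the adjacent-odd loops) all come out as you predict.

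One point you should state explicitly rather than leave implicit, though. A single-entry transition map is strictly a bookkeeping fiction: $A^{(q)}$ and $B^{(q)}$ mix all values within a parity block of~$L_q$, so what your rank-partner rule really transports is \emph{color}, and this is legitimate only because every parity block of every $R^{(q)}$ is monochromatic at each stage --- otherwise the propagation rule through \eqref{AB} is not even well defined, let alone decomposable into single-position paths. For blue and red that monochromaticity is Proposition~\ref{p:ind-rule}(i) and its \eqref{gon-inv}/\eqref{g-i-B} counterpart; for the green stream it has to be verified alongside your enumeration (it does hold), whereas the paper's per-position argument sidesteps the issue entirely. So: correct in substance, and sound once the block-monochromaticity is recorded, but computationally heavier than the paper's four-histories count.
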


\begin{proof}
 \begin{itemize}\itemsep 0pt
  \item[$\mathrm{(i)}$] This follows from the fact that positions \eqref{gon} are initial and positions \eqref{rgon} are final for the blue sector, while positions \eqref{rgon} are initial and positions \eqref{gon} are final for the red sector; what remains is ``green''.
  \item[$\mathrm{(ii)}$]
   \begin{itemize}\itemsep 0pt
    \item[$\mathrm{(a)}$] First, we note that there are exactly two $R$-matrices changing the entry at position~$(j,k)$, namely $R^{(j)}$ and~$R^{(k)}$. So, there is one ``initial'' row entry at a given position, one ``inner'', and one ``final''.
    \item[$\mathrm{(b)}$] It follows that, for a chosen~$R^{(q)}$ and a position~$(q,l)\; \bigl({} = (l,q)\bigr)$, either the input row entry~$u_{q,l}$ is initial or the output row entry~$v_{q,l}$ is final. Indeed, the first case happens if $R^{(q)}$ acts before~$R^{(l)}$, while the second case happens if $R^{(q)}$ acts after~$R^{(l)}$.
    \item[$\mathrm{(c)}$] It follows then from~$\mathrm{(b)}$ that there is no matrix~$R^{(q)}$ belonging wholly to a single color sector. Indeed, it can be easily seen from~\eqref{gif} that, for a given~$q$, some of positions~$(q,l)\; \bigl({} = (l,q)\bigr)$ are green if considered as either initial or final, while some other are not.
    \item[$\mathrm{(d)}$] Thus, the color at each given position is changed with the action of any $R$-matrix. This together with the preceding analysis shows that there are exactly four possibilities for the initial--inner--final colors: {\color{blue}blue}--{\color{green}green}--{\color{red}red}, {\color{red}red}--{\color{green}green}--{\color{blue}blue}, {\color{green}green}--{\color{blue}blue}--{\color{green}green} and {\color{green}green}--{\color{red}red}--{\color{green}green}. Hence, item~$\mathrm{(ii)}$ follows.
   \end{itemize}
 \end{itemize}
\end{proof}

It follows from Proposition~\ref{p:g}~$\mathrm{(i)}$ that any side of~\eqref{2n-simplex} is a \emph{direct sum} of two submatrices, acting one in the green sector, while the other in the blue and red sectors. Moreover, the latter submatrix, due to the fact that $B^{(q)} = \left( A^{(q)} \right)^{-1}$ in our construction, has the obvious block structure
\begin{equation}\label{br-block}
\begin{pmatrix} 0 & {\color{blue}K^{-1}} \\ {\color{red}K} & 0 \end{pmatrix}.
\end{equation}

\begin{example}\label{x:d}
For $n=2$, we have a 4-simplex equation, a heptagon (7-gon) equation, and the following coloring of~\eqref{sim} (where we write simply ``{\color{blue}12}'' instead of~``{\color{blue}(1,2)}'', etc.):
\begin{equation*}
    ({\color{blue}12},{\color{green}13},{\color{blue}14},{\color{green}15},{\color{red}23},
    {\color{green}24},{\color{red}25},{\color{blue}34},{\color{green}35},{\color{red}45}).
\end{equation*}
The 4-simplex equation reads
\begin{equation}\label{4-simplex}
    R_{1234}^{(1)}R_{1567}^{(2)}R_{2589}^{(3)}R_{3680}^{(4)}R_{4790}^{(5)}
    =R_{4790}^{(5)}R_{3680}^{(4)}R_{2589}^{(3)}R_{1567}^{(2)}R_{1234}^{(1)},
\end{equation}
and can be re-written, using~\eqref{RABP}, as
\begin{align}
    {\color{blue}(A_{13}^{(1)}A_{18}^{(3)}A_{38}^{(5)})} &
    {\color{green}(B_{24}^{(1)}A_{26}^{(2)}B_{29}^{(3)}A_{49}^{(4)}B_{69}^{(5)})}
    {\color{red}(B_{57}^{(2)}B_{70}^{(4)})}  \nonumber \\
    {}={} & {\color{blue}(A_{38}^{(4)}A_{13}^{(2)})}
    {\color{green}(A_{49}^{(5)}B_{69}^{(4)}A_{29}^{(3)}B_{24}^{(2)}A_{26}^{(1)})}
    {\color{red}(B_{70}^{(5)}B_{50}^{(3)}B_{57}^{(1)})} , \label{4s-decomp}
\end{align}
where the equality holds also for each \emph{separate} color.

The left-hand side of either \eqref{4-simplex} or~\eqref{4s-decomp} can be visualized as in Figure~\ref{fig:4-simplex}.
\end{example}

\begin{figure}[t]
 \begin{center}
  \begin{tikzpicture}

\draw[blue, thick, -> ] (0,0) -- (0,0.9) ;
\draw[green, thick, -> ] (1,0) -- (1,0.9) ;
\draw[blue, thick, -> ] (2,0) -- (2,0.9) ;
\draw[green, thick, -> ] (3,0) -- (3,0.9) ;
\draw[red, thick, -> ] (4,0) -- (4,1.9) ;
\draw[green, thick, -> ] (5,0) -- (5,1.9) ;
\draw[red, thick, -> ] (6,0) -- (6,1.9) ;
\draw[blue, thick, -> ] (7,0) -- (7,2.9) ;
\draw[green, thick, -> ] (8,0) -- (8,2.9) ;
\draw[red, thick, -> ] (9,0) -- (9,3.9) ;

\draw[green, thick, -> ] (0,1) -- (0,1.9) ;
\draw[blue, thick, -> ] (1,1) -- (1,2.9) ;
\draw[green, thick, -> ] (2,1) -- (2,3.9) ;
\draw[blue, thick, -> ] (3,1) -- (3,4.9) ;
\draw[green, thick, -> ] (4,2) -- (4,2.9) ;
\draw[red, thick, -> ] (5,2) -- (5,3.9) ;
\draw[green, thick, -> ] (6,2) -- (6,4.9) ;
\draw[green, thick, -> ] (7,3) -- (7,3.9) ;
\draw[blue, thick, -> ] (8,3) -- (8,4.9) ;
\draw[green, thick, -> ] (9,4) -- (9,4.9) ;

\draw[red, thick, -> ] (0,2) -- (0,6) ;
\draw[green, thick, -> ] (1,3) -- (1,6) ;
\draw[red, thick, -> ] (2,4) -- (2,6) ;
\draw[green, thick, -> ] (3,5) -- (3,6) ;
\draw[blue, thick, -> ] (4,3) -- (4,6) ;
\draw[green, thick, -> ] (5,4) -- (5,6) ;
\draw[blue, thick, -> ] (6,5) -- (6,6) ;
\draw[red, thick, -> ] (7,4) -- (7,6) ;
\draw[green, thick, -> ] (8,5) -- (8,6) ;
\draw[blue, thick, -> ] (9,5) -- (9,6) ;

\draw (0,1) node[black, anchor=east] {$R_{1234}^{(1)}={{\color{blue}A_{13}^{(1)}}{\color{green}B_{24}^{(1)}}P_{12}P_{34}} \hspace{ 1em }$} ;
\filldraw[gray] (0,0.95) rectangle (3,1.05) ;
\filldraw[gray] (0,1) circle (0.1);
\filldraw[gray] (1,1) circle (0.1);
\filldraw[gray] (2,1) circle (0.1);
\filldraw[gray] (3,1) circle (0.1);

\draw (0,2) node[black, anchor=east] {$R_{1567}^{(2)}={{\color{green}A_{16}^{(2)}}{\color{red}B_{57}^{(2)}}P_{15}P_{67}} \hspace{ 1em }$} ;
\filldraw[gray] (0,1.95) rectangle (6,2.05) ;
\filldraw[gray] (0,2) circle (0.1);
\filldraw[gray] (4,2) circle (0.1);
\filldraw[gray] (5,2) circle (0.1);
\filldraw[gray] (6,2) circle (0.1);

\draw (0,3) node[black, anchor=east] {$R_{2589}^{(3)}={{\color{blue}A_{28}^{(3)}}{\color{green}B_{59}^{(3)}}P_{25}P_{89}} \hspace{ 1em }$} ;
\filldraw[gray] (1,2.95) rectangle (8,3.05) ;
\filldraw[gray] (1,3) circle (0.1);
\filldraw[gray] (4,3) circle (0.1);
\filldraw[gray] (7,3) circle (0.1);
\filldraw[gray] (8,3) circle (0.1);

\draw (0,4) node[black, anchor=east] {$R_{3680}^{(4)}={{\color{green}A_{38}^{(4)}}{\color{red}B_{60}^{(4)}}P_{36}P_{80}} \hspace{ 1em }$} ;
\filldraw[gray] (2,3.95) rectangle (9,4.05) ;
\filldraw[gray] (2,4) circle (0.1);
\filldraw[gray] (5,4) circle (0.1);
\filldraw[gray] (7,4) circle (0.1);
\filldraw[gray] (9,4) circle (0.1);

\draw (0,5) node[black, anchor=east] {$R_{4790}^{(5)}={{\color{blue}A_{49}^{(5)}}{\color{green}B_{70}^{(5)}}P_{47}P_{90}} \hspace{ 1em }$} ;
\filldraw[gray] (3,4.95) rectangle (9,5.05) ;
\filldraw[gray] (3,5) circle (0.1);
\filldraw[gray] (6,5) circle (0.1);
\filldraw[gray] (8,5) circle (0.1);
\filldraw[gray] (9,5) circle (0.1);

\draw (0,-0.3) node[blue, anchor=center] {$12$} ;
\draw (1,-0.3) node[green, anchor=center] {$13$} ;
\draw (2,-0.3) node[blue, anchor=center] {$14$} ;
\draw (3,-0.3) node[green, anchor=center] {$15$} ;
\draw (4,-0.3) node[red, anchor=center] {$23$} ;
\draw (5,-0.3) node[green, anchor=center] {$24$} ;
\draw (6,-0.3) node[red, anchor=center] {$25$} ;
\draw (7,-0.3) node[blue, anchor=center] {$34$} ;
\draw (8,-0.3) node[green, anchor=center] {$35$} ;
\draw (9,-0.3) node[red, anchor=center] {$45$} ;

\draw (0,6.3) node[red, anchor=center] {$12$} ;
\draw (1,6.3) node[green, anchor=center] {$13$} ;
\draw (2,6.3) node[red, anchor=center] {$14$} ;
\draw (3,6.3) node[green, anchor=center] {$15$} ;
\draw (4,6.3) node[blue, anchor=center] {$23$} ;
\draw (5,6.3) node[green, anchor=center] {$24$} ;
\draw (6,6.3) node[blue, anchor=center] {$25$} ;
\draw (7,6.3) node[red, anchor=center] {$34$} ;
\draw (8,6.3) node[green, anchor=center] {$35$} ;
\draw (9,6.3) node[blue, anchor=center] {$45$} ;

\draw (0,-0.7) node[anchor=center] {$\scriptstyle 1$} ;
\draw (1,-0.7) node[anchor=center] {$\scriptstyle 2$} ;
\draw (2,-0.7) node[anchor=center] {$\scriptstyle 3$} ;
\draw (3,-0.7) node[anchor=center] {$\scriptstyle 4$} ;
\draw (4,-0.7) node[anchor=center] {$\scriptstyle 5$} ;
\draw (5,-0.7) node[anchor=center] {$\scriptstyle 6$} ;
\draw (6,-0.7) node[anchor=center] {$\scriptstyle 7$} ;
\draw (7,-0.7) node[anchor=center] {$\scriptstyle 8$} ;
\draw (8,-0.7) node[anchor=center] {$\scriptstyle 9$} ;
\draw (9,-0.7) node[anchor=center] {$\scriptstyle 0$} ;

\end{tikzpicture}
 \end{center}
\caption{The left-hand side of~\eqref{4-simplex} corresponds to going from the bottom to the top of this picture}
\label{fig:4-simplex}
\end{figure}
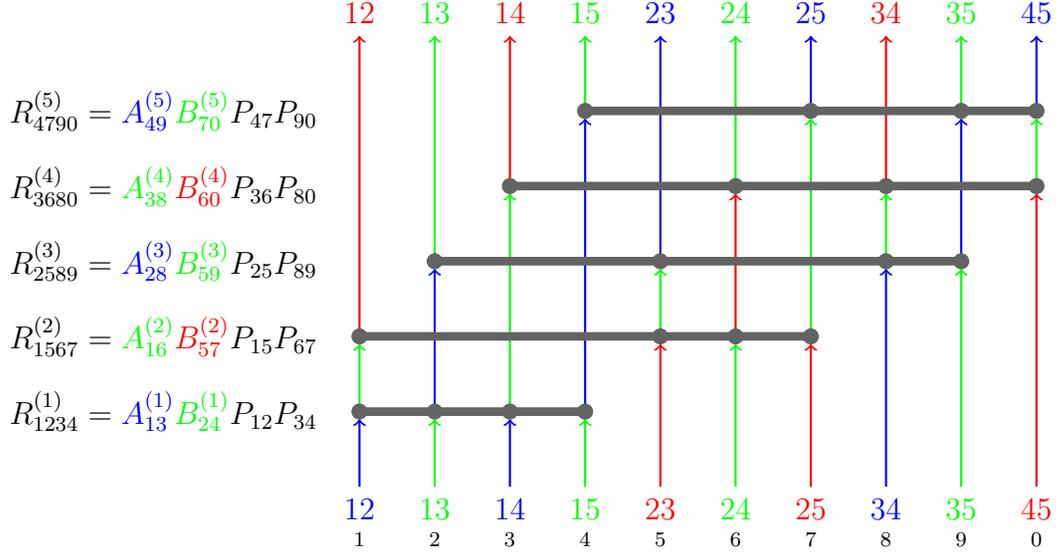

\subsection[Reduction to $(2n-1)$-simplex]{Reduction to $\boldsymbol{(2n-1)}$-simplex}\label{ss:reduction}

If we have a solution of direct-sum $2n$-simplex equation~\eqref{2n-simplex}, we can, under some mild conditions, apply to it a \emph{reduction} procedure and obtain a solution of the ``lower'' equation, namely $(2n-1)$-simplex. Here we describe such a procedure---an analogue of taking ``partial trace'' in the quantum case---and, specifically, what it gives for $R$-operators~\eqref{RABP}.

Each $R$-matrix in~\eqref{2n-simplex} imposes $2n$ linear conditions on $4n$~variables---$2n$ input and $2n$ output. In this Subsection, we would like to denote its input and output rows as follows:
\begin{equation}\label{Rio}
\begin{pmatrix} u_1^{(q)} & u_2^{(q)} & \ldots & u_{2n}^{(q)} \end{pmatrix} R^{(q)} = \begin{pmatrix} v_1^{(q)} & v_2^{(q)} & \ldots & v_{2n}^{(q)} \end{pmatrix}.
\end{equation}
We take then a number $\lambda \in F$ and impose the following additional condition:
\begin{equation}\label{vu}
u_{2n}^{(q)} = \lambda v_{2n}^{(q)},
\end{equation}
and do it for all $R$-matrices except the last, that is, for $R_{\mathcal A_1}^{(1)}$, \ldots, $R_{\mathcal A_{2n}}^{(2n)}$. We assume that, for each of these $R$-matrices, \eqref{Rio} together with~\eqref{vu} determine also~$u_{2n}^{(q)}$ from the rest of input variables (which is usually the case). Hence, in particular, variables $v_1^{(q)}, v_2^{(q)}, \ldots, v_{2n-1}^{(q)}$ are determined given $u_1^{(q)}, u_2^{(q)}, \ldots, u_{2n-1}^{(q)}$, and we denote~$Z^{(q)}$ the matrix giving these dependencies:
\begin{equation}\label{Z}
\begin{pmatrix} u_1^{(q)} & u_2^{(q)} & \ldots & u_{2n-1}^{(q)}  \end{pmatrix} Z^{(q)} = \begin{pmatrix} v_1^{(q)} & v_2^{(q)} & \ldots & v_{2n-1}^{(q)}  \end{pmatrix}.
\end{equation}

\begin{proposition}\label{p:Z}
Matrices $Z^{(q)}$ satisfy the $(2n-1)$-simplex equation
\begin{equation}\label{(2n-1)-simplex}
Z^{(1)} Z^{(2)} \dots Z^{(2n)} = Z^{(2n)} \dots Z^{(2)} Z^{(1)}.
\end{equation}
\end{proposition}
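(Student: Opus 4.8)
The plan is to read off the $(2n-1)$-simplex equation~\eqref{(2n-1)-simplex} from the full $2n$-simplex equation~\eqref{2n-simplex} by eliminating the ``last strand''~$2n+1$. First I would record the combinatorial facts that make this strand special. The positions carrying the index~$2n+1$ are exactly $(1,2n+1),\dots,(2n,2n+1)$; these are precisely the $2n$ positions forming the support~$\mathcal A_{2n+1}$ of~$R^{(2n+1)}$, and, since $2n+1=\max L_q$ for every $q\le 2n$, the position $(q,2n+1)$ is the \emph{last} coordinate of $R^{(q)}$, i.e.\ the one carrying $u^{(q)}_{2n},v^{(q)}_{2n}$ in~\eqref{Rio} and subjected to the constraint~\eqref{vu}. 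Moreover each such position is touched by exactly two matrices, $R^{(q)}$ and~$R^{(2n+1)}$. Let $W$ denote the coordinate subspace spanned by the remaining positions $(i,j)$ with $i<j\le 2n$. By~\eqref{Z}, once~\eqref{vu} holds at $R^{(q)}$ the latter restricts to $Z^{(q)}$ on~$W$, acting on exactly the $2n-1$ positions of $R^{(q)}$ that lie in $W$ (the correct support for $Z^{(q)}$), while $R^{(2n+1)}$ acts trivially on~$W$.

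I would then prove~\eqref{(2n-1)-simplex} as an operator identity on~$W$ by evaluating both sides on an arbitrary row $\mathbf u\in W$. Extend $\mathbf u$ to a full input by assigning initial values $\mathbf w^{\mathrm{init}}=(w^{\mathrm{init}}_1,\dots,w^{\mathrm{init}}_{2n})$ to the strand-$2n+1$ positions, chosen so that the local constraint~\eqref{vu} holds at each of $R^{(1)},\dots,R^{(2n)}$ along the left-hand side of~\eqref{2n-simplex}. The block decomposition of $R^{(q)}$ along its last coordinate shows that the assumption made before~\eqref{Z} (solvability of~\eqref{Rio} together with~\eqref{vu} for $u^{(q)}_{2n}$) is exactly the invertibility of the relevant scalar, so $w^{\mathrm{init}}_q$ is determined, one after another in the order $q=1,2,\dots,2n$, by the values already produced on $W$ (each $w^{\mathrm{init}}_q$ is free until $R^{(q)}$ acts). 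With this input every $R^{(q)}$ with $q\le 2n$ restricts to $Z^{(q)}$ on $W$ and $R^{(2n+1)}$ fixes $W$, so the left-hand side sends $\mathbf u\mapsto \mathbf u\,Z^{(1)}Z^{(2)}\cdots Z^{(2n)}$ on $W$.

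The heart of the argument is to show that the \emph{same} full input also makes~\eqref{vu} hold along the right-hand side, where the matrices act in the opposite order and $R^{(2n+1)}$ comes first. Write $\mathbf w^{\mathrm{in}}$ for the strand-$2n+1$ row fed into $R^{(2n+1)}$ on the left-hand side and $\mathbf w^{\mathrm{out}}=\mathbf w^{\mathrm{in}}R^{(2n+1)}$ for its output, which is the \emph{final} value there since $R^{(2n+1)}$ is applied last. The left-hand constraints say precisely $\mathbf w^{\mathrm{init}}=\lambda\,\mathbf w^{\mathrm{in}}$. On the right-hand side $R^{(2n+1)}$ is applied first to $\mathbf w^{\mathrm{init}}$, producing $\mathbf w'=\mathbf w^{\mathrm{init}}R^{(2n+1)}=\lambda\,\mathbf w^{\mathrm{in}}R^{(2n+1)}=\lambda\,\mathbf w^{\mathrm{out}}$. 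Since~\eqref{2n-simplex} holds, both sides produce the same final row; in particular the final strand-$2n+1$ values coincide and equal $\mathbf w^{\mathrm{out}}$. Thus $\mathbf w'=\lambda\,\mathbf w^{\mathrm{out}}$ is exactly the statement that, for each $R^{(q)}$ with $q\le2n$, its input at $(q,2n+1)$ equals $\lambda$ times its output, i.e.\ the constraint~\eqref{vu} on the right-hand side. Hence every such $R^{(q)}$ again restricts to $Z^{(q)}$ on $W$, the right-hand side sends $\mathbf u\mapsto\mathbf u\,Z^{(2n)}\cdots Z^{(2)}Z^{(1)}$, and equating the two images (legitimate because~\eqref{2n-simplex} is an operator identity, hence holds on the invariant subspace $W$) yields~\eqref{(2n-1)-simplex} for arbitrary $\mathbf u$.

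The step I expect to be the main obstacle is exactly this compatibility: a priori the initial data enforcing~\eqref{vu} for one ordering need not enforce it for the reversed ordering, and it is only the full simplex equation, tying the two final strand-$2n+1$ rows together, that rescues the argument. This is the direct-sum counterpart of the cyclicity-of-trace step used when one takes a partial trace of a quantum simplex solution. A minor point to handle separately is $\lambda=0$ (the ``Dirichlet'' closure $u^{(q)}_{2n}=0$): there $\mathbf w^{\mathrm{init}}=\mathbf 0$ and $\mathbf w'=\mathbf 0$, both constraints hold trivially, and the same conclusion follows; for $\lambda\neq0$ the displayed relations require no division and go through verbatim.
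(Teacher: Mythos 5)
Your proof is correct and takes essentially the same route as the paper's own argument: prepare a special input row enforcing~\eqref{vu} along the left-hand side of~\eqref{2n-simplex}, observe by linearity that the input and output rows of~$R^{(2n+1)}$ on the right-hand side are $\lambda$~times those on the left, and then invoke the full $2n$-simplex identity to conclude that~\eqref{vu} holds along the right-hand side as well, so each~$R^{(q)}$ reduces to the same~$Z^{(q)}$ on both sides. Your write-up only adds detail the paper leaves implicit, namely the sequential determination of the auxiliary entries $w^{\mathrm{init}}_1,\dots,w^{\mathrm{init}}_{2n}$, the explicit projection onto the coordinate subspace~$W$, and the trivial case $\lambda=0$.
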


In~\eqref{(2n-1)-simplex}, we did not write out the positions where matrices~$Z^{(q)}$ act; these are defined in complete analogy with Subsection~\ref{ss:2n-simplex}.

\begin{proof}
First, write a special input row for the whole $2n$-simplex~\eqref{2n-simplex}, choosing last entries~$u_{2n}^{(q)}$ obeying~\eqref{vu} for all $R$-matrices in the l.h.s.\ except~$R^{(2n+1)}$: first for~$R^{(1)}$, then for~$R^{(2)}$, \ldots, finally for~$R^{(2n)}$. No conditions are imposed on the other input entries.

As $\lambda$ is the same for all the mentioned~$R^{(q)}$, and $R^{(2n+1)}$ acts exactly on the last positions of all other matrices, the input rows for~$R^{(2n+1)}$ in the l.h.s.\ and r.h.s.\ of~\eqref{2n-simplex} are proportional: in the r.h.s., they are $\lambda$~times those in the l.h.s. Hence, the same holds also for the output rows, and we can describe what happens at such a position under action of two matrices as follows (denoting $w_{2n}^{(q)}$ the final value at this position):
\begin{align}
\text{l.h.s.:} & \quad u_{2n}^{(q)} = \lambda v_{2n}^{(q)} \xmapsto{\;\; R^{(q)}\;\;} v_{2n}^{(q)} \xmapsto{R^{(2n+1)}} w_{2n}^{(q)}, \label{lw} \\[1ex]
\text{r.h.s.:} & \quad u_{2n}^{(q)} = \lambda v_{2n}^{(q)} \xmapsto{R^{(2n+1)}} \lambda w_{2n}^{(q)} \xmapsto{\;\;R^{(q)}\;\;} w_{2n}^{(q)}. \label{rw}
\end{align}
Indeed, the values around the first arrow in~\eqref{rw} must be proportional to those around the second arrow in~\eqref{lw}. But then we see from the second arrow in~\eqref{rw} that the same condition as~\eqref{vu} holds for~$R^{(q)}$ also in the r.h.s., hence, its reduction~$Z^{(q)}$ is the same matrix in both sides, and \eqref{(2n-1)-simplex} holds.
\end{proof}

Explicit expressions for matrices~$Z^{(q)}$ in the case of~$R^{(q)}$ as in~\eqref{RABP} are given by
\begin{equation}\label{Zex}
 \begin{aligned}
Z^{(q)}=A_{1,3,\ldots ,2n-1}^{(q)}\, P_{12}P_{34}\cdots  P_{2n-3,2n-2}\, \Lambda _{2n-1}\, B_{1,3,\ldots ,2n-1}^{(q)},\\
 \Lambda _{2n-1}=\left(\begin{array}{cc} \mathbb 1_{2n-2} & 0 \\ 0 & \lambda  \\ \end{array} \right),
 \end{aligned}
\end{equation}
as a small exercise shows. Formula~\eqref{Zex} can be treated as a generalization of (the direct-sum analogue of) the tetrahedron equation solution in~\cite[(1.8)]{KashS}.

Note that $\lambda = 0$ can also be taken in~\eqref{vu}, and the obtained solutions in this case are
\[
Z^{(q)}=S^{\mathrm T} R^{(q)} S,\qquad S=\left(\begin{array}{c} \mathbb 1_{2n-1} \\ 0 \\ \end{array} \right),\qquad q=1,2,\ldots  ,2n-1,
\]
that is, $Z^{(q)}$ is simply the submatrix of~$R^{(q)}$ obtained by deleting its last row and column.

Finally, we note that the described kind of reduction can be applied not one but several times: from $(2n-1)$-simplex to $(2n-2)$-simplex and so on. The solutions of simplex equations thus obtained deserve to be the subject of a separate research.

\section{Construction of solutions}\label{s:sols}

\subsection[Multivectors $\phi $ and $\psi $ on which the construction is based]{Multivectors $\boldsymbol{\phi }$ and $\boldsymbol{\psi }$ on which the construction is based}\label{ss:mv}

Let $F$ be a field, $F^{2n+1}$ the space of $(2n+1)$-rows, $\mathsf e_1,\ldots  ,\mathsf e_{2n+1}$ the standard basis in~$F^{2n+1}$, and $\mathsf e^1,\ldots  ,\mathsf e^{2n+1}$ the dual basis in the dual space. Let
\begin{equation}\label{M}
\mathcal{M} = \begin{pmatrix}
 \alpha _{1,1} & \alpha _{1,2} & \cdots  & \alpha _{1,2n+1} \\
 \alpha _{2,1} & \alpha _{2,2} & \cdots  & \alpha _{2,2n+1} \\
 \vdots  & \vdots  & \ddots & \vdots  \\
 \alpha _{n+1,1} & \alpha _{n+1,2} & \cdots  & \alpha _{n+1,2n+1} \\
\end{pmatrix} 
\end{equation}
be a matrix defining an element~$L$ of the Grassmannian $\Gr(n+1,2n+1)$, that is, a matrix of the full rank~$n+1$ whose rows, which can also be written as
\[
v_i=\alpha _{i,1}\mathsf e_1+\dots+\alpha _{i,2n+1}\mathsf e_{2n+1},
\]
span an $(n+1)$-dimen\-sional plane~$L\subset F^{2n+1}$.

Below, we will be working with \emph{multivectors}---elements of the \emph{exterior algebra} $\bigwedge F^{2n+1}$ over~$F^{2n+1}$. First, we introduce an $(n+1)$-vector---the exterior product of all~$v_i$:
\[
w =v_1\wedge v_2\wedge \cdots \wedge v_{n+1}=\sum _{k_1<\cdots <k_{n+1}} p_{k_1,\ldots ,k_{n+1}}\mathsf e_{k_1}
\wedge \dots \wedge \mathsf e_{k_{n+1}},
\]
where $p_{k_1,\ldots ,k_{n+1}}$ are determinants made of $k_1$-th, \ldots, $k_{n+1}$-th columns of~$\mathcal{M}$, called also \emph{Pl{\" u}cker coordinates} of the Grassmannian $\Gr(n+1,2n+1)$. 

We will also need the \emph{convolution} operation~$\,\oplrcorner\,$, see, for instance,~\cite[page~42]{Shaf}. In our context, this will be essentially the same as the left \emph{Grassmann derivative}~\cite[Eq.~(2.1.2)]{Ber}: for instance, $\phi ^{i,j}$ below in formula~\eqref{phi} can be written also as $\frac{\partial}{\partial \mathsf e_j} \frac{\partial}{\partial \mathsf e_i} w$.

\begin{assumption}\label{a:p-nonzero}
Below, we assume that \emph{all Pl{\" u}cker coordinates are nonzero.}
\end{assumption}

Under Assumption~\ref{a:p-nonzero}, field~$F$ must not be too small, as the following example shows. A deeper analysis using the ``irrelevance of algebraic inequalities'' principle shows that Assumption~\ref{a:p-nonzero} can be partially relaxed, because what we ultimately need is that there must be no division by zero in formulas \eqref{A-def} and~\eqref{B-def} below. This may be important when working with finite fields; we leave these questions for future research.

\begin{example}\label{x:GF4}
For $n=2$, the smallest field allowing to construct matrix~$\mathcal M$~\eqref{M} obeying Assumption~\ref{a:p-nonzero} is the Galois field~$\mathbb F_4$ of four elements. For instance,
\[
\mathcal M = \begin{pmatrix} 1 & 0 & 0 & 1 & 1 \\ 0 & 1 & 0 & 1 & \zeta \\ 0 & 0 & 1 & 1 & \zeta^2 \end{pmatrix},
\]
where $\zeta \in \mathbb F_4$ is not equal to either $0$ or~$1$.
\end{example}

We define $(n-1)$-vectors
\begin{equation}\label{phi}
\phi ^{i,j} = \left(\mathsf e^j\wedge \mathsf e^i\right) \oplrcorner w =\sum _{k_1<\cdots <k_{n-1}} p_{i,j,k_1,\ldots ,k_{n-1}}\mathsf e_{k_1}
\wedge \cdots \wedge \mathsf e_{k_{n-1}},
\end{equation}
and $(n+3)$-vectors
\begin{equation}\label{psi}
\psi _{i,j}=\mathsf e_i\wedge \mathsf e_j\wedge w =\sum _{k_1<\cdots <k_{n+1}} p_{k_1,\ldots ,k_{n+1}}\mathsf e_i\wedge \mathsf e_j\wedge \mathsf e_{k_1}\wedge \cdots \wedge \mathsf e_{k_{n+1}}.
\end{equation}
Both objects $\phi ^{i,j}$ and $\psi _{i,j}$ are antisymmetric in their indices.

\begin{example}
For $n=3$ we have a 2-vector
\begin{align*}
\phi ^{1,2}=\sum _{1\leq k<l\leq 7} p_{1,2,k,l}\mathsf e_k\wedge \mathsf e_l=-p_{1,2,3,4} \mathsf e_3\wedge \mathsf e_4-p_{1,2,3,5} \mathsf e_3\wedge \mathsf e_5-p_{1,2,3,6} \mathsf e_3\wedge \mathsf e_6\\
{}-p_{1,2,3,7} \mathsf e_3\wedge \mathsf e_7-p_{1,2,4,5} \mathsf e_4\wedge \mathsf e_5-p_{1,2,4,6} \mathsf e_4\wedge \mathsf e_6-p_{1,2,4,7} \mathsf e_4\wedge \mathsf e_7\\
{}-p_{1,2,5,6} \mathsf e_5\wedge \mathsf e_6-p_{1,2,5,7} \mathsf e_5\wedge \mathsf e_7-p_{1,2,6,7} \mathsf e_6\wedge \mathsf e_7
\end{align*}
and a 6-vector
\begin{align*}
\psi _{2,3}=p_{1,4,5,6} \mathsf e_1\wedge \mathsf e_2\wedge \mathsf e_3\wedge \mathsf e_4\wedge \mathsf e_5\wedge \mathsf e_6+p_{1,4,5,7} \mathsf e_1\wedge \mathsf e_2\wedge \mathsf e_3\wedge \mathsf e_4\wedge \mathsf e_5\wedge \mathsf e_7\\
{}+p_{1,4,6,7} \mathsf e_1\wedge \mathsf e_2\wedge \mathsf e_3\wedge \mathsf e_4\wedge \mathsf e_6\wedge \mathsf e_7+p_{1,5,6,7} \mathsf e_1\wedge \mathsf e_2\wedge \mathsf e_3\wedge \mathsf e_5\wedge \mathsf e_6\wedge \mathsf e_7 \\ 
{}+p_{4,5,6,7} \mathsf e_2\wedge \mathsf e_3\wedge \mathsf e_4\wedge \mathsf e_5\wedge \mathsf e_6\wedge \mathsf e_7 .
\end{align*}
\end{example}

\subsection{Dimensions of linear spaces spanned by multivectors}\label{ss:dims}

\begin{proposition}\label{p:le-n}
If we fix index~$j$ in $\phi^{i,j}$, while letting $i$ take any $n$ different values $i\ne j$, then the $n$ resulting multivectors are linearly independent. 
\end{proposition}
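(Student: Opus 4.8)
The plan is to prove linear independence directly, by reading off the coordinates of the $\phi^{i,j}$ in the standard monomial basis $\{\mathsf e_{k_1}\wedge\cdots\wedge\mathsf e_{k_{n-1}} : k_1<\cdots<k_{n-1}\}$ of $\bigwedge^{n-1}F^{2n+1}$. Fix $j$ and the $n$ chosen values $i_1,\dots,i_n$ (all distinct and all $\ne j$), and suppose a dependence $\sum_{s=1}^{n} c_s\,\phi^{i_s,j}=0$. By the explicit expansion~\eqref{phi}, the coefficient of the basis multivector $\mathsf e_{k_1}\wedge\cdots\wedge\mathsf e_{k_{n-1}}$ in $\phi^{i_s,j}$ is exactly the Pl\"ucker coordinate $p_{i_s,j,k_1,\dots,k_{n-1}}$. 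Hence, for every $(n-1)$-element index set $K=\{k_1<\cdots<k_{n-1}\}$, the dependence yields the scalar relation $\sum_{s=1}^{n} c_s\,p_{i_s,j,K}=0$.

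The key observation, which is the crux of the argument, is that $p_{i_s,j,K}$ vanishes automatically whenever $i_s\in K$, since then the Pl\"ucker symbol has a repeated index. This lets me isolate one coefficient at a time. For each fixed $t\in\{1,\dots,n\}$ I would take $K=\{i_1,\dots,i_n\}\setminus\{i_t\}$, an $(n-1)$-element set containing every $i_s$ with $s\ne t$ but not $i_t$ (and not $j$, as all $i_s\ne j$). Then every term with $s\ne t$ drops out, and the relation collapses to $c_t\,p_{i_t,j,K}=0$. The surviving coordinate is $p_{i_t,j,i_1,\dots,\widehat{i_t},\dots,i_n}$, whose index multiset is $\{j,i_1,\dots,i_n\}$, a set of $n+1$ \emph{distinct} elements; it is therefore a genuine Pl\"ucker coordinate of $\Gr(n+1,2n+1)$ and is nonzero by Assumption~\ref{a:p-nonzero}. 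Thus $c_t=0$, and letting $t$ range over $1,\dots,n$ gives $c_1=\cdots=c_n=0$, proving the claim.

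There is no genuinely hard analytic step here; the entire content is the counting that makes the isolation trick work. Removing $j$ together with the $n$ chosen indices $i_1,\dots,i_n$ from $\{1,\dots,2n+1\}$ accounts for exactly $n+1$ of the $2n+1$ indices, so the surviving coordinate has precisely the $n+1$ slots required to be a Pl\"ucker coordinate of $\Gr(n+1,2n+1)$ — this is what makes Assumption~\ref{a:p-nonzero} applicable. The only point demanding a little care, and the closest thing to an obstacle, is the sign and ordering bookkeeping in~\eqref{phi}: the identification ``coefficient $=p_{i_s,j,K}$'' is to be read with the antisymmetry of the Pl\"ucker symbol understood, so that sorting $K$ into increasing order to match the basis only introduces an overall sign common to all $s$, which is irrelevant to the vanishing conclusion. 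It is also worth stressing that the argument uses Assumption~\ref{a:p-nonzero} in exactly the form ``all Pl\"ucker coordinates with distinct indices are nonzero'', and nothing beyond it.
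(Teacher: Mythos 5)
Your proof is correct and follows essentially the same route as the paper's: in both cases one isolates the coefficient $c_t$ (the paper's $\lambda_l$) by extracting the coefficient of the basis monomial $\mathsf e_{i_1}\wedge\dots\wedge\hat{\mathsf e}_{i_t}\wedge\dots\wedge\mathsf e_{i_n}$, observing that all other terms vanish because of a repeated index in the Pl\"ucker symbol, and then invoking Assumption~\ref{a:p-nonzero} on the surviving coordinate $p_{i_t,j,i_1,\dots,\hat{i}_t,\dots,i_n}$. Your additional remarks on the index count and the antisymmetry bookkeeping are accurate but add nothing beyond the paper's argument.
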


\begin{proof}
Denote the chosen values of~$i$ as $i_1 < \dots < i_n$. Suppose there is a vanishing linear combination
\begin{equation}\label{lin-comb-phi}
\sum_{l=1}^{n} \lambda_l \phi^{{i_l},j} = 0
\end{equation}
of the mentioned multivectors. We are going to show that all its coefficients~$\lambda_l$ are zero.

Fix $l$, and consider, for all $i=i_1,\ldots, i_n$, the corresponding coefficients of $\mathsf e_{i_1} \wedge \dots \wedge \hat{\mathsf e}_{i_l} \wedge \dots \wedge \mathsf e_{i_n} $ in the r.h.s.\ of~\eqref{phi}. Clearly, they all vanish except the $l$-th one which is $\lambda_l p_{i_l,j, i_1, \dots, \hat{i}_l, \dots, i_n }$. Combined with Assumption~\ref{a:p-nonzero}, this immediately gives $\lambda_l=0$.
\end{proof}

\begin{proposition}\label{p:exactly-n}
The linear space spanned by multivectors $\phi^{i,j}$ with a fixed~$j$ is exactly $n$-dimen\-sional.
\end{proposition}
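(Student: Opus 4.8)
The lower bound is already in hand: Proposition~\ref{p:le-n} shows that any $n$ of the multivectors $\phi^{i,j}$ (fixed $j$) are linearly independent, so the span has dimension at least~$n$. The whole content of the proposition is therefore the matching upper bound, that these $2n$ multivectors span a space of dimension at most~$n$. My plan is to read $\phi^{i,j}$ as an iterated contraction of~$w$ and to exhibit a single $n$-dimensional space that contains all of them. Using the Grassmann-derivative description recalled after~\eqref{M}, one has, up to an irrelevant overall sign,
\[
\phi^{i,j} = \mathsf e^i \lrcorner \bigl(\mathsf e^j \lrcorner w\bigr).
\]
Since $j$ is fixed, I set $\omega = \mathsf e^j \lrcorner w$ once and for all, so that $\phi^{i,j} = \pm\,\mathsf e^i \lrcorner \omega$ for every $i\ne j$ (and $\mathsf e^j\lrcorner\omega=0$).

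The key step is to understand~$\omega$. Recall $w=v_1\wedge\cdots\wedge v_{n+1}$ is a \emph{decomposable} $(n+1)$-vector, associated with the $(n+1)$-plane~$L$. I would first observe that, under Assumption~\ref{a:p-nonzero}, the $j$-th column of~$\mathcal M$ is nonzero (a zero column would force every Pl\"ucker coordinate $p_{\dots}$ containing the index~$j$ to vanish), so the functional $\mathsf e^j$ restricts to a \emph{nonzero} linear form on~$L$. Its kernel inside~$L$ is then an $n$-dimensional subspace
\[
L' = L \cap \ker \mathsf e^j .
\]
Choosing a basis $u_1,\dots,u_n$ of $L'$ and extending it to a basis $u_1,\dots,u_n,u_{n+1}$ of~$L$, I can rewrite $w$ as a scalar multiple of $u_1\wedge\cdots\wedge u_{n+1}$, whence $\omega=\mathsf e^j\lrcorner w$ is a \emph{nonzero scalar} multiple of $u_1\wedge\cdots\wedge u_n$. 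In other words, the contraction of a decomposable multivector with a covector is again decomposable, and $\omega$ is a nonzero generator of $\bigwedge^n L'$.

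With this in place the upper bound is immediate. For any index~$i$, the contraction $\mathsf e^i \lrcorner \omega$ equals $\mathsf e^i\lrcorner(u_1\wedge\cdots\wedge u_n)$ up to scalar, and expanding it by the Leibniz-type rule gives a combination of the multivectors $u_1\wedge\cdots\wedge\widehat{u_k}\wedge\cdots\wedge u_n$, all of which lie in $\bigwedge^{n-1}L'$. Hence every $\phi^{i,j}$ (fixed $j$) lies in $\bigwedge^{n-1}L'$, a space of dimension $\binom{n}{n-1}=n$. This forces the span to have dimension at most~$n$, and combined with Proposition~\ref{p:le-n} it is exactly~$n$. The main obstacle to watch is the decomposability claim for~$\omega$ together with the verification that $\mathsf e^j|_L\neq 0$; everything else is bookkeeping once $L'$ is identified. (As a by-product, since Proposition~\ref{p:le-n} already supplies $n$ independent elements inside the $n$-dimensional space $\bigwedge^{n-1}L'$, the $\phi^{i,j}$ in fact span all of $\bigwedge^{n-1}L'$.)
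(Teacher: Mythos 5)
Your proof is correct, but it takes a genuinely different route from the paper's. The paper contracts in the opposite order: it first observes that the single contractions $\mathsf e^i \oplrcorner w$, over all $i$, span only an $(n+1)$-dimensional space (because $w$ is decomposable with support~$L$), and then that the further action of~$\mathsf e^j$ kills exactly one dimension of that span (the one generated by $\mathsf e^j \oplrcorner w$ itself), so exactly $n$ dimensions survive --- a rank--nullity count that delivers both bounds at once, without ever invoking Proposition~\ref{p:le-n}. You instead contract with $\mathsf e^j$ first, prove that $\omega = \mathsf e^j \oplrcorner w$ is a decomposable generator of $\bigwedge^n L'$ where $L' = L \cap \ker \mathsf e^j$, and trap every $\phi^{i,j}$ inside the explicit $n$-dimensional space $\bigwedge^{n-1} L'$; this yields only the upper bound, and you import the lower bound from Proposition~\ref{p:le-n}, which is legitimate since that proposition precedes this one. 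Your route buys two things: it makes explicit a point the paper's two-line proof leaves implicit, namely that $\mathsf e^j$ restricts to a nonzero form on~$L$ (your argument that a zero $j$-th column of~$\mathcal M$ would contradict Assumption~\ref{a:p-nonzero} is exactly right, and in fact needs only one nonzero Pl\"ucker coordinate containing~$j$; note the paper's ``kills one'' tacitly needs the same fact, since if $\mathsf e^j|_L = 0$ the contraction would kill everything), and it identifies the span geometrically as $\bigwedge^{n-1}L'$, which also justifies your correct by-product that the $\phi^{i,j}$ fill that whole space. The paper's route buys brevity and self-containedness. Your sign caveat about the order of the two Grassmann derivatives is indeed irrelevant for span considerations, and your basis computation showing that contracting a decomposable multivector with a covector yields a decomposable one is sound.
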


\begin{proof}
There are only $(n+1)$ linearly independent among expressions $\mathsf e^i \oplrcorner w$ for all~$i$. Further action of~$\mathsf e^j$, that is, taking $\left(\mathsf e^j \wedge \mathsf e^i \right) \oplrcorner w$ as in~\eqref{phi}, kills one of these, so, $n$~linearly independent expressions remain.
\end{proof}

\begin{proposition}\label{p:phi-plgn}
The multivectors~$\phi^{i,j}$ on which either l.h.s.\ or r.h.s.\ of the $(2n+ \nobreak 1)$-gon relation acts, are linearly independent, that is, span an $\dfrac{n(n+1)}{2}$-dimen\-sional linear space.
\end{proposition}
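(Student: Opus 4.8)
The plan is to reduce the statement to the nonsingularity of an explicit $\frac{n(n+1)}{2}\times\frac{n(n+1)}{2}$ matrix, and to exhibit an ordering in which this matrix is triangular with nonzero diagonal. The multivectors in question are those attached to the initial positions~\eqref{gon}, namely $\phi^{2k-1,2l}$ with $1\le k\le l\le n$; since both sides of~\eqref{(2n+1)-gon} act on the same initial row~\eqref{ini-row}, the two collections coincide, and the final positions~\eqref{rgon}, giving the multivectors $\phi^{2k,2l+1}$, are handled by the symmetric argument. The first step is to record, from~\eqref{phi}, that the coefficient of a basis monomial $\mathsf e_{m_1}\wedge\cdots\wedge\mathsf e_{m_{n-1}}$ (with $m_1<\cdots<m_{n-1}$ and $M=\{m_1,\ldots,m_{n-1}\}$) in $\phi^{i,j}$ equals $p_{i,j,m_1,\ldots,m_{n-1}}$. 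By antisymmetry of the Pl\"ucker coordinates this is $\pm p_{\{i,j\}\cup M}$ when $\{i,j\}\cap M=\emptyset$, and it vanishes otherwise because a repeated index makes the corresponding determinant zero; so, invoking Assumption~\ref{a:p-nonzero}, this coefficient is nonzero precisely when $M$ is disjoint from $\{i,j\}$.

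The core of the proof is then a combinatorial assignment: to each pair $(k,l)$, $1\le k\le l\le n$, I would attach an $(n-1)$-element index set $M_{k,l}$ and order the pairs lexicographically in $(k,l)$, taking
\[
M_{k,l}=\{1,3,\ldots,2k-3\}\cup\bigl(\{2k,2k+2,\ldots,2n\}\setminus\{2l\}\bigr),
\]
that is, all odd indices below $2k-1$ together with all even indices $\ge 2k$ other than $2l$. A direct count gives $|M_{k,l}|=(k-1)+(n-k)=n-1$, and $M_{k,l}$ is disjoint from $\{2k-1,2l\}$, so the diagonal entry of the matrix whose $\bigl((k,l),(k',l')\bigr)$ position is the coefficient of $\mathsf e_{M_{k',l'}}$ in $\phi^{2k-1,2l}$ is $\pm p_{\{2k-1,2l\}\cup M_{k,l}}\neq 0$.

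To establish triangularity I would verify that $M_{k,l}$ meets the index set $\{2k'-1,2l'\}$ of every lexicographically earlier pair $(k',l')<(k,l)$, which forces the off-diagonal entry above the diagonal to vanish. Indeed, if $k'<k$ then $2k'-1\le 2k-3$ is an odd index below $2k-1$, hence lies in $M_{k,l}$; and if $k'=k$ with $l'<l$ (so $k\le l'\le l-1$) then $2l'\in\{2k,\ldots,2l-2\}\subseteq M_{k,l}$. In either case the intersection is nonempty, so the matrix is lower triangular with nonzero diagonal, hence nonsingular, and the $\phi^{2k-1,2l}$ are linearly independent, spanning an $\frac{n(n+1)}{2}$-dimensional space. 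The only genuinely creative step is locating the hitting sets $M_{k,l}$; once the lexicographic order and the above choice are fixed, the argument collapses to the two elementary containments just checked, so I expect no serious obstacle beyond guessing the assignment. The case $n=2$, where $M_{1,1}=\{4\}$, $M_{1,2}=\{2\}$, $M_{2,2}=\{1\}$ produce a lower-triangular matrix with diagonal $p_{124},\,-p_{124},\,p_{134}$, is a useful guide for the general pattern.
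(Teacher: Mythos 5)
Your proposal is correct and takes essentially the same approach as the paper: the paper's proof is likewise a triangularization, eliminating the coefficients $\lambda_{2k-1,2l}$ one at a time by picking, for each pair, a basis monomial $\mathsf e_{M}$ whose Pl\"ucker coefficient is nonzero by Assumption~\ref{a:p-nonzero} but vanishes (a repeated index) for all pairs not yet eliminated, only it runs through the pairs in the opposite order (first $\lambda_{2n-1,2n}$, then the remaining pairs with second index $2n$, and so on, ending with an ``etc.''). Your closed-form hitting sets $M_{k,l}$ and the lexicographic ordering simply make that same induction fully explicit.
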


\begin{proof}
We have to show that
\[
\sum_{j=1}^n \sum_{k= j}^n \lambda _{2 j-1,2 k} \phi ^{2 j-1,2 k} = 0
\]
implies $\lambda _{2 j-1,2 k} = 0$ for all $j, k$ entering in the above double sum.

We note first that
\[
\mathsf e_1 \wedge  \mathsf e_3 \wedge  \dots  \wedge  \mathsf e_{2 n-5} \wedge  \mathsf e_{2 n-3}\, p_{ 1,3,\dots ,2 n-3,2 n-1,2 n}
\]
appears only in $\phi^{2 n-1,2 n}$ hence $\lambda_{2 n-1,2 n} = 0$. After that we note next that
\[
\mathsf e_1 \wedge  \mathsf e_3 \wedge  \dots  \wedge  \mathsf e_{2 n-5} \wedge  \mathsf e_{2 n-2}\, p_{1,3,\dots ,2 n-3,2 n-2,2 n}
\]
appears only in $\phi^{2 n-3,2 n}$ hence $\lambda_{2 n-3,2 n} = 0$. Continuing in this way suppose we have proved that $\lambda_{2 n-2 j-1,2 n} = 0$ for $j = 1, 2, \dots  , k - 1$,
then the term
\[
\mathsf e_1 \wedge  \mathsf e_3 \wedge  \dots  \wedge  \mathsf e_{2 n-2 k-1} \wedge  \mathsf e_{2 n-2 k} \wedge  \dots  \wedge  \mathsf e_{2 n-2}\, p_{1,3,\dots ,2 n-2 k-1,2 n-2 k,\dots ,2 n-2,2 n}
\]
appears only in $\phi^{2 n-2 k-1,2 n}$ hence $\lambda_{2 n-2 k-1,2 n} = 0$. By induction $\lambda _{2 k-1,2 n} = 0$ for $k = 1, 2, \dots , n$.

Next we look for the term
\[
\mathsf e_1 \wedge  \mathsf e_3 \wedge  \dots  \wedge  \mathsf e_{2 n-3} \wedge  \mathsf e_{2 n-2} \wedge  \mathsf e_{2 n}\, p_{1,3,\dots ,2 n-3,2 n-2,2 n}
\]
from which $\lambda_{2 n-3,2 n-2} = 0$ follows, etc.
\end{proof}

Now a proposition about $\phi$'s and~$\psi$'s in the \emph{green} sector of the $2n$-simplex relation. The green sector acts on those pairs $i,j$ where either both $i$ and~$j$ are even, or both $i$ and~$j$ are odd. It is enough for us to consider just the odd $i$ and~$j$ for~$\phi$'s, and just the even $i$ and~$j$ for~$\psi$'s.

\begin{proposition}\label{p:green}
 \hspace{0em} 
 \begin{itemize}\itemsep 0pt
  \item[$\mathrm{(i)}$] The $\dfrac{n(n+1)}{2}$ multivectors~$\phi_{i,j}$ for all odd $i$ and~$j$ are linearly independent.
  \item[$\mathrm{(ii)}$] The $\dfrac{n(n-1)}{2}$ multivectors~$\psi_{i,j}$ for all even $i$ and~$j$ are linearly independent.
 \end{itemize}
\end{proposition}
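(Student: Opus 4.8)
The plan is to prove both parts by the same ``isolating coefficient'' technique already used for Proposition~\ref{p:phi-plgn}, only here the common-parity restriction makes the bookkeeping immediate, so no induction is needed. In each case I assume a vanishing linear combination and, for every index pair in the collection, exhibit one basis multivector whose coefficient in that combination reduces to a single Pl\"ucker coordinate (nonzero by Assumption~\ref{a:p-nonzero}) times the corresponding unknown coefficient, thereby forcing that coefficient to vanish.

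For part $\mathrm{(i)}$, recall from~\eqref{phi} that the coefficient of $\mathsf e_{m_1}\wedge\cdots\wedge\mathsf e_{m_{n-1}}$ in $\phi^{i,j}$ equals $\pm p_{i,j,m_1,\ldots,m_{n-1}}$ when $\{m_1,\ldots,m_{n-1}\}$ is disjoint from $\{i,j\}$, and vanishes otherwise. There are $n+1$ odd indices in $\{1,\ldots,2n+1\}$, so the collection consists of $\binom{n+1}{2}=\frac{n(n+1)}{2}$ multivectors. Given an odd pair $i<j$, I would take $\{m_1,\ldots,m_{n-1}\}$ to be the set of all the remaining odd indices. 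Then an odd pair $(i',j')$ can contribute to the coefficient of $\mathsf e_{m_1}\wedge\cdots\wedge\mathsf e_{m_{n-1}}$ only if both $i'$ and $j'$ lie outside $\{m_\bullet\}$; but the only odd indices outside $\{m_\bullet\}$ are $i$ and $j$, so $(i',j')=(i,j)$ is forced. Hence the coefficient of this basis vector in the assumed vanishing combination is $\pm\lambda_{i,j}\,p_{i,j,m_1,\ldots,m_{n-1}}$, and nonvanishing of the Pl\"ucker coordinate gives $\lambda_{i,j}=0$.

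Part $\mathrm{(ii)}$ is dual. From~\eqref{psi}, the coefficient of a basis $(n+3)$-vector $\mathsf e_S$ (with $|S|=n+3$) in $\psi_{i,j}$ equals $\pm p_{S\setminus\{i,j\}}$ when $\{i,j\}\subseteq S$, and vanishes otherwise. There are $n$ even indices, so the collection has $\binom{n}{2}=\frac{n(n-1)}{2}$ members. Given an even pair $i<j$, I would choose $S=\{i,j\}\cup\{1,3,\ldots,2n+1\}$, the union of $\{i,j\}$ with all $n+1$ odd indices, so that $|S|=n+3$ and the only even elements of $S$ are $i$ and $j$. Then an even pair $(i',j')$ contributes to the coefficient of $\mathsf e_S$ only if $\{i',j'\}\subseteq S$, which forces $\{i',j'\}=\{i,j\}$. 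The coefficient of $\mathsf e_S$ in the vanishing combination is therefore $\pm\mu_{i,j}\,p_{1,3,\ldots,2n+1}$, and since $p_{1,3,\ldots,2n+1}\ne 0$ we conclude $\mu_{i,j}=0$.

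The only real content is the parity count that makes each distinguishing multivector single out a unique pair: in $\mathrm{(i)}$, removing the two chosen odd indices leaves exactly $n-1$ odd indices to fill $\{m_\bullet\}$; in $\mathrm{(ii)}$, the $n+1$ odd indices together with $\{i,j\}$ exactly fill an $(n+3)$-set whose only even members are $i$ and $j$. I expect no genuine obstacle beyond this combinatorial check, together with noting that for $n=1$ part $\mathrm{(ii)}$ is vacuous (there is only one even index) and that for $n\ge 2$ the relevant set $S$ indeed fits inside $\{1,\ldots,2n+1\}$.
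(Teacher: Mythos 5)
Your proposal is correct and follows essentially the same route as the paper: the paper's proof writes out part (ii) by isolating the coefficient of $\mathsf e_{i_0}\wedge \mathsf e_{j_0}\wedge \mathsf e_1\wedge \mathsf e_3\wedge \cdots \wedge \mathsf e_{2n+1}$ --- precisely your choice of $S=\{i,j\}\cup\{1,3,\ldots,2n+1\}$ --- and declares part (i) ``almost identical'', which is exactly the odd-complement basis vector you make explicit, with Assumption~\ref{a:p-nonzero} supplying the nonvanishing Pl\"ucker coordinate in both cases. If anything, your version is slightly more detailed than the paper's, spelling out the parity counting that the paper leaves implicit.
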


We will see when proving Theorem~\ref{th:simplex} that actually the space spanned by \emph{all}~$\phi_{i,j}$ in the green sector is \emph{exactly} $\frac{n(n+1)}{2}$-dimen\-sional, while the space spanned by \emph{all}~$\psi_{i,j}$ in the green sector is \emph{exactly} $\frac{n(n-1)}{2}$-dimen\-sional.

\begin{proof}
The proofs of $\mathrm{(i)}$ and~$\mathrm{(ii)}$ are almost identical, only with some obvious changes, so it is enough to write out here only the proof of~$\mathrm{(ii)}$. Suppose, like in the proofs of Propositions \ref{p:le-n} and~\ref{p:phi-plgn}, that there is a dependence
\[
\sum_{\substack{i,j\mathrm{\;even} \\ i<j }} \lambda_{i,j} \psi_{i,j} = 0,
\]
choose any $i=i_0$ and $j=j_0$, and consider the coefficients of
\[\mathsf e_{i_0} \wedge \mathsf e_{j_0} \wedge \mathsf e_1  \wedge \mathsf e_3 \wedge \dots \wedge \mathsf e_{2n+1},
\]
using now of course the r.h.s.\ of~\eqref{psi} for~$\psi_{i,j}$. The same argument as in the mentioned proofs shows that $\lambda_{i_0,j_0}=0$.
\end{proof}

\subsection[Matrices $A^{(q)}$ and $B^{(q)}$ and their relations to $\phi $ and $\psi $]{Matrices $\boldsymbol{A^{(q)}}$ and $\boldsymbol{B^{(q)}}$ and their relations to $\boldsymbol{\phi }$ and $\boldsymbol{\psi }$}

We are going to define matrices $A^{(q)}$ and~$B^{(q)}=(A^{(q)})^{-1}$ that will give us solutions to both $(2n+1)$-gon and $2n$-simplex equations. Technically, we prefer to give here first explicit expressions \eqref{A-def} and~\eqref{B-def} for their matrix entries, and then prove their key properties in Proposition~\ref{p:AB}. It must be noted, however, that, \emph{conceptually}, the reason for existence of $A^{(q)}$ and~$B^{(q)}$ satisfying \eqref{phi-A} and~\eqref{phi-B} below lies in the dimension count given in Propositions \ref{p:le-n} and~\ref{p:exactly-n}. That is, the mentioned propositions guarantee that formulas \eqref{phi-A} and~\eqref{phi-B} below can be used as correct \emph{definitions} of $A^{(q)}$ and~$B^{(q)}$ respectively (and as for Proposition~\ref{p:phi-plgn}, it will be used for \emph{proving} the $(2n+1)$-gon relation, see Theorem~\ref{th:polygon}).

For $q\in [2n+1]=\{1,2,\ldots  ,2n+1\}$ we denote
\begin{equation*}
L_q=[2n+1]\backslash \{q\}=\left\{a_1,a_2,\ldots  ,a_{2n}\right\}
\end{equation*}
(slightly changing notations with respect to~\eqref{Lq}; remember also Convention~\ref{c:o}) and define the following two families of $(n\times n)$-matrices:
\begin{align}
\left(A^{(q)}\right)_i^j=(-1)^i \, \frac{p_{a_{2j},\, a_1,a_3,\ldots ,\hat{a}_{2i-1},\ldots ,a_{2n-1},\, q}}{p_{a_1,a_3,\ldots ,a_{2n-1},\,q}}, \label{A-def} \\
\left(B^{(q)}\right)_i^j=(-1)^i \, \frac{p_{a_{2j-1},\, a_2,a_4,\ldots ,\hat{a}_{2i},\ldots ,a_{2n},\,q}}{p_{a_2,a_4,\ldots ,a_{2n},\,q}}, \label{B-def} \\
q=1,2,\ldots ,2n+1 . \nonumber
\end{align}
Here index~$i$ numbers the rows, while $j$---the columns of an $n\times n$ matrix.

\begin{proposition}\label{p:AB}
The following equalities hold:
\begin{align}
\sum _{i=1}^n \phi ^{a_{2i-1},q} \left(A^{(q)}\right)_i^j=-\phi ^{a_{2j},q} , \label{phi-A} \\
\sum_{j=1}^n \left(A^{(q)}\right)_i^j\psi _{a_{2j},q}=\psi _{a_{2i-1,q}} , \label{psi-A} \\
\sum _{i=1}^n \phi ^{a_{2i},q}\left(B^{(q)}\right)_i^j=-\phi ^{a_{2j-1},q}, \label{phi-B} \\
\sum _{j=1}^n \left(B^{(q)}\right)_i^j\psi_{a_{2j-1},q}=\psi _{a_{2i,q}}. \label{psi-B}
\end{align}
\end{proposition}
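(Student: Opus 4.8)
The plan is to prove \eqref{phi-A} in full and to obtain the other three identities by the same mechanism, since \eqref{phi-B} is literally \eqref{phi-A} after exchanging the roles of the odd- and even-positioned elements $a_{2i-1}\leftrightarrow a_{2i}$ of $L_q$, while the two $\psi$-identities are the ``$\wedge\, w$'' counterparts of the two ``$\lrcorner\, w$'' ones. Conceptually nothing is surprising: by Propositions~\ref{p:le-n} and~\ref{p:exactly-n} the multivectors $\phi^{a_1,q},\phi^{a_3,q},\dots,\phi^{a_{2n-1},q}$ form a basis of the $n$-dimensional space spanned by all $\phi^{i,q}$, so $\phi^{a_{2j},q}$ is a \emph{unique} linear combination of them, and the content of \eqref{phi-A} is that the coefficients are exactly $-(A^{(q)})_i^j$. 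The real work is to pin these coefficients down, and this will come from the quadratic Pl\"ucker relations satisfied by the manifestly decomposable multivector $w=v_1\wedge\dots\wedge v_{n+1}$.

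First I would pass from the multivector identity \eqref{phi-A} to scalar identities by comparing, for every multi-index $k_1<\dots<k_{n-1}$, the coefficients of $\mathsf e_{k_1}\wedge\dots\wedge\mathsf e_{k_{n-1}}$ on both sides. Using the expansion \eqref{phi}, whose coefficient at $\mathsf e_{k_1}\wedge\dots\wedge\mathsf e_{k_{n-1}}$ is $p_{i,q,k_1,\dots,k_{n-1}}$, then substituting \eqref{A-def} and clearing the denominator $p_{a_1,a_3,\dots,a_{2n-1},q}$, the identity \eqref{phi-A} becomes the single quadratic relation
\[
\sum_{i=1}^n (-1)^i\, p_{a_{2j},\,a_1,a_3,\dots,\hat a_{2i-1},\dots,a_{2n-1},\,q}\;
p_{a_{2i-1},\,q,\,k_1,\dots,k_{n-1}}
= -\,p_{a_1,a_3,\dots,a_{2n-1},\,q}\;p_{a_{2j},\,q,\,k_1,\dots,k_{n-1}},
\]
required to hold for all $k_1<\dots<k_{n-1}$.

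Next I would recognize this as a Grassmann--Pl\"ucker relation of $L$. Apply the general quadratic Pl\"ucker relation to the ``short'' index set $(q,k_1,\dots,k_{n-1})$ of length $n$ and the ``long'' set $(a_{2j},a_1,a_3,\dots,a_{2n-1},q)$ of length $n+2$, summing over which element of the long set is transferred to the short one. The transferred element $q$ produces a coordinate with a repeated index, hence a vanishing term; the remaining $n+1$ terms are exactly those with transferred element $a_{2i-1}$ ($i=1,\dots,n$), giving the left-hand side above, and $a_{2j}$, giving the right-hand side. I expect the main obstacle to be purely the sign bookkeeping: one must verify that the explicit factor $(-1)^i$ in \eqref{A-def}, together with the signs incurred when reordering each $p_{a_{2i-1},q,\dots}$ and $p_{a_{2j},q,\dots}$ into increasing order, reproduces precisely the alternating signs attached to the position of the transferred element in the sorted long index set. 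This is the one place where care, rather than ideas, is needed.

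Finally, \eqref{psi-A} is treated identically, now comparing coefficients of basis $(n+3)$-vectors $\mathsf e_{m_1}\wedge\dots\wedge\mathsf e_{m_{n+3}}$ via the expansion \eqref{psi}; after clearing the same denominator it again collapses to a Grassmann--Pl\"ucker relation, with the term that would repeat $q$ dropping out. Alternatively, one can deduce the two $\psi$-identities from the two $\phi$-identities by the Hodge-type duality (with respect to the standard basis) sending the decomposable $(n+1)$-vector $w$ of $L$ to the decomposable $n$-vector $\star w$ of the orthogonal complement $L^{\perp}$, under which $\phi^{i,j}=(\mathsf e^{j}\wedge\mathsf e^{i})\lrcorner w$ and $\psi_{i,j}=\mathsf e_i\wedge\mathsf e_j\wedge w$ are interchanged up to sign with the corresponding objects of $L^{\perp}$, and $A^{(q)},B^{(q)}$ of $L$ go over to those of $L^{\perp}$; either route reduces everything to the single sign computation described above.
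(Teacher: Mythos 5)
Your proposal is correct, and for the two $\phi$-identities it is exactly the paper's argument: the paper proves \eqref{phi-A} and \eqref{phi-B} by citing precisely the quadratic Pl\"ucker relation you obtain by comparing coefficients of $\mathsf e_{k_1}\wedge\dots\wedge\mathsf e_{k_{n-1}}$ --- its displayed relation
$\sum_{i=1}^n (-1)^i p_{a_{2i},q,b_1,\dots,b_{n-1}}\, p_{a_{2j-1},a_2,\dots,\hat a_{2i},\dots,a_{2n},q} + p_{a_{2j-1},q,b_1,\dots,b_{n-1}}\, p_{a_2,\dots,a_{2n},q}=0$
is your relation with the odd/even roles of the $a$'s exchanged (the $b$'s being your $k$'s), and, like you, it leaves the sign bookkeeping implicit. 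The one genuine divergence is in the $\psi$-identities: rather than a second round of coefficient comparison on $(n+3)$-vectors, the paper invokes the Pl\"ucker relations in the intrinsic form $\bigl((\mathsf e^{a_1}\wedge\cdots\wedge\hat{\mathsf e}^{a_{2i-1}}\wedge\cdots\wedge\mathsf e^{a_{2n-1}}\wedge\mathsf e^q)\oplrcorner w\bigr)\wedge w = 0$, i.e.\ the contraction of the decomposable $w$ by an $n$-covector is a \emph{vector lying in $L$}, so wedging it with $w$ annihilates it; combined with the expansion of that vector stated right after the proof and one further wedge with $\mathsf e_q$, this yields \eqref{psi-A} with the factor $(-1)^i$ cancelling automatically. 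That packaging buys a coordinate-free derivation with no new sign analysis, whereas your route redoes the bookkeeping but reduces to the same family of quadratic relations and is equally valid. Your alternative via Hodge duality with $L^{\perp}$ is sound in spirit, but the assertion that $A^{(q)},B^{(q)}$ of $L$ ``go over to those of $L^{\perp}$'' is loose: taking complementary minors sends the numerator of $(A^{(q)})_i^j$ to a $B$-type index set with $i$ and $j$ transposed and odd/even roles swapped (and $L^{\perp}$ is $n$-dimensional, so the matrices there are not literally instances of \eqref{A-def}--\eqref{B-def}); since you offer this only as an alternative, it is not a gap in the proof.
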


Note that $A^{(q)}$ and~$B^{(q)}$ act on the \emph{rows} of~$\phi $'s, but on the \emph{columns} of~$\psi $'s. We have thus extended such formulas as~\eqref{AB} from just scalar row entries to entries taking values in multidimensional linear spaces (and introduced similar column entries). Note also that \eqref{BA} follows from \eqref{phi-A} and~\eqref{phi-B}, because the~$\phi$'s in either side of these formulas span an $n$-dimensional space and thus \eqref{phi-A} and~\eqref{phi-B} unambiguously determine $A^{(q)}$ and~$B^{(q)}$, respectively.

\begin{proof}
Equality~\eqref{phi-A} follows from the Pl{\" u}cker relations
\[
\sum _{i=1}^n (-1)^ip_{a_{2i},q,b_1,\ldots ,b_{n-1}} p_{a_{2j-1},a_2,\ldots ,\hat{a}_{2i},\ldots ,a_{2n},q}+p_{a_{2j-1},q,b_1,\ldots
,b_{n-1}}p_{a_2,\ldots ,a_{2n},q}=0,
\]
while~\eqref{psi-A} again from the Pl{\" u}cker relations in the form
\[
\bigl(\left(\mathsf e^{a_1}\wedge \mathsf e^{a_3}\wedge \cdots \wedge \hat{\mathsf e}^{a_{2i-1}}\wedge \cdots \wedge \mathsf e^{a_{2n-1}}\wedge \mathsf e^q\right) \oplrcorner w \bigr)\wedge w = 0.
\]
Equalities~\eqref{phi-B} and~\eqref{psi-B} are proved similarly.
\end{proof}

Note that
\begin{align*}
\sum _{j=1}^n p_{a_{2j},a_1,\ldots ,\hat{a}_{2i-1},\ldots ,a_{2n-1},q}\mathsf e_{a_{2j}}-(-1)^ip_{a_1,a_3,\ldots ,a_{2n-1},q}\mathsf e_{a_{2i-1}}\\
=\left(\mathsf e^{a_1}\wedge \mathsf e^{a_3}\wedge \cdots \wedge \hat{\mathsf e}^{a_{2i-1}}\wedge \cdots \wedge \mathsf e^{a_{2n-1}}\wedge \mathsf e^q\right) \oplrcorner w\, .
\end{align*}

\subsection[$(2n+1)$-gon and $2n$-simplex]{$\boldsymbol{(2n+1)}$-gon and $\boldsymbol{2n}$-simplex}

We are going to prove that matrices~$A^{(q)}$ given by~\eqref{A-def} give solution to $(2n+1)$-gon equation, by using rows of multivectors~$\phi^{i,j}$. We would like to explain this in terms of representing the $(2n+1)$-gon relation \emph{diagrammatically}, like in Figure~\ref{fig:pentagon}. The input and output $\phi$'s are thus thought of as placed on the ``legs'' coming from one matrix~$A^{(q)}$ to another, or ``initial'' or ``final'' legs, corresponding to ``input'' and ``output'' row entries for the whole l.h.s.\ or r.h.s.

We write, in the l.h.s., $\phi^{i,j}$ at the leg going \emph{from} $A^{(i)}$ \emph{to}~$A^{(j)}$, while in the r.h.s., we write $\phi^{j,i}=-\phi^{i,j}$ at such leg. And in the case if the leg of~$A^{(i)}$ is ``initial'' or ``final'', the missing index~$j$ is taken from the matrix having the corresponding ``initial'' or ``final'' leg in the other side of the equation.

It is important that, due to the antisymmetry of~$\phi^{i,j}$ in $i$ and~$j$, this agrees with~\eqref{phi-A} (just change the order of indices of~$\phi$ in the r.h.s.\ of~\eqref{phi-A}, because the corresponding legs go out of~$A^{(q)}$, and the minus in~\eqref{phi-A} disappears, as desired).

For $B^{(q)}$, we write $\phi $'s in the very same way, and this again agrees with~\eqref{phi-B}.

We will also prove in this Subsection that our $A^{(q)}$ and~$B^{(q)}$ give a solution of \emph{$2n$-simplex} equation (via~\eqref{RABP}). In that proof, we will be using rows of~$\phi $'s in the same fashion as above, but also we will be using \emph{columns} of~$\psi $'s.

\begin{example}
The case yielding matrices~$A^{(q)}$ for \emph{pentagon} equation is given in Figure~\ref{fig:penta-phi}
 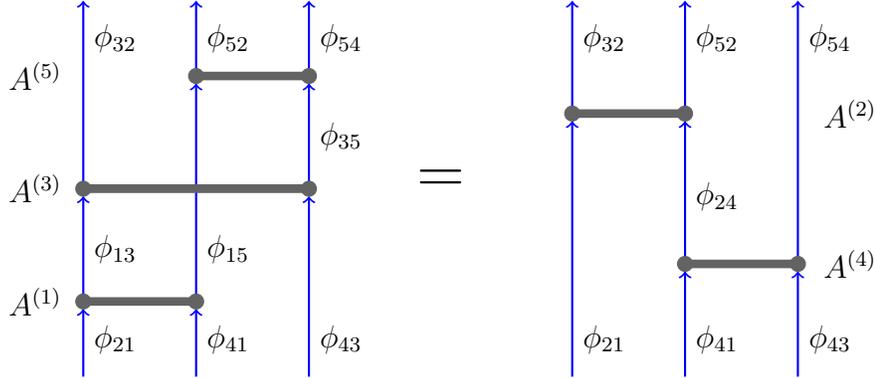
\begin{figure}
  \begin{center}
   \begin{tikzpicture}

\draw[blue, thick, -> ] (0,0) -- (0,0.9) ;
\draw[blue, thick, -> ] (1.5,0) -- (1.5,0.9) ;
\draw[blue, thick, -> ] (3,0) -- (3,2.4) ;

\draw[blue, thick, -> ] (0,1) -- (0,2.4) ;
\draw[blue, thick, -> ] (1.5,1) -- (1.5,3.9) ;
\draw[blue, thick, -> ] (3,2.5) -- (3,3.9) ;

\draw[blue, thick, -> ] (0,2.5) -- (0,5) ;
\draw[blue, thick, -> ] (1.5,4) -- (1.5,5) ;
\draw[blue, thick, -> ] (3,4) -- (3,5) ;

\draw (0,1) node[black, anchor=east] {$A^{(1)}$ \hspace{ 1em }} ;
\filldraw[gray] (0,0.95) rectangle (1.5,1.05) ;
\filldraw[gray] (0,1) circle (0.1);
\filldraw[gray] (1.5,1) circle (0.1);

\draw (0,2.5) node[black, anchor=east] {$A^{(3)}$ \hspace{ 1em }} ;
\filldraw[gray] (0,2.45) rectangle (3,2.55) ;
\filldraw[gray] (0,2.5) circle (0.1);
\filldraw[gray] (3,2.5) circle (0.1);

\draw (0,4) node[black, anchor=east] {$A^{(5)}$ \hspace{ 1em }} ;
\filldraw[gray] (1.5,3.95) rectangle (3,4.05) ;
\filldraw[gray] (1.5,4) circle (0.1);
\filldraw[gray] (3,4) circle (0.1);

\draw (0,1.7) node[anchor=west] {$\phi_{13}$};

\draw (1.5,1.7) node[anchor=west] {$\phi_{15}$};

\draw (3,3.2) node[anchor=west] {$\phi_{35}$};

\draw (0,0.5) node[anchor=west] {$\phi_{21}$} ;
\draw (1.5,0.5) node[anchor=west] {$\phi_{41}$} ;
\draw (3,0.5) node[anchor=west] {$\phi_{43}$} ;

\draw (0,4.5) node[anchor=west] {$\phi_{32}$} ;
\draw (1.5,4.5) node[anchor=west] {$\phi_{52}$} ;
\draw (3,4.5) node[anchor=west] {$\phi_{54}$} ;

\draw (4.75,2.6) node {\huge =};

\draw[blue, thick, -> ] (6.5,0) -- (6.5,3.4) ;
\draw[blue, thick, -> ] (8,0) -- (8,1.4) ;
\draw[blue, thick, -> ] (9.5,0) -- (9.5,1.4) ;

\draw[blue, thick, -> ] (6.5,3.5) -- (6.5,5) ;
\draw[blue, thick, -> ] (8,1.5) -- (8,3.4) ;
\draw[blue, thick, -> ] (8,3.5) -- (8,5) ;
\draw[blue, thick, -> ] (9.5,1.5) -- (9.5,5) ;

\draw (9.5,1.5) node[black, anchor=west] {\hspace{ .15em } $A^{(4)}$} ;
\filldraw[gray] (8,1.45) rectangle (9.5,1.55) ;
\filldraw[gray] (8,1.5) circle (0.1);
\filldraw[gray] (9.5,1.5) circle (0.1);

\draw (9.5,3.5) node[black, anchor=west] {\hspace{ .15em } $A^{(2)}$} ;
\filldraw[gray] (6.5,3.45) rectangle (8,3.55) ;
\filldraw[gray] (6.5,3.5) circle (0.1);
\filldraw[gray] (8,3.5) circle (0.1);

\draw (8,2.4) node[anchor=west] {$\phi_{24}$};

\draw (6.5,0.5) node[anchor=west] {$\phi_{21}$} ;
\draw (8,0.5) node[anchor=west] {$\phi_{41}$} ;
\draw (9.5,0.5) node[anchor=west] {$\phi_{43}$} ;

\draw (6.5,4.5) node[anchor=west] {$\phi_{32}$} ;
\draw (8,4.5) node[anchor=west] {$\phi_{52}$} ;
\draw (9.5,4.5) node[anchor=west] {$\phi_{54}$} ;

\end{tikzpicture}
   \caption{Multivectors $\phi^{i,j}$ as row entries for pentagon equation}
   \label{fig:penta-phi}
  \end{center}
 \end{figure}
\end{example}

\begin{theorem}\label{th:polygon}
$(2n+1)$-gon equation~\eqref{(2n+1)-gon} is satisfied by matrices~$A^{(q)}$~\eqref{A-def}.
\end{theorem}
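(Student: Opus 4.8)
The plan is to interpret both sides of the $(2n+1)$-gon equation~\eqref{(2n+1)-gon} not as operators on scalar rows, but as operators acting on a single row whose entries are the \emph{multivectors}~$\phi^{i,j}$, and then to combine the transformation rule~\eqref{phi-A} with the linear independence established in Proposition~\ref{p:phi-plgn}. The guiding idea is that~\eqref{phi-A} turns each $A^{(q)}$ into an explicit substitution rule on $\phi$'s, so that tracking $\phi$'s through a product of matrices amounts to the purely combinatorial index-bookkeeping already carried out in Proposition~\ref{p:ind-rule}.

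First I would set up the input row. Following the diagrammatic convention described just before the theorem, I place at the initial legs the multivectors indexed exactly as in~\eqref{ini-row}, obtaining
\[
\Phi_{\mathrm{in}} = \begin{pmatrix} \phi^{1,2} & \phi^{1,4} & \dots & \phi^{2n-1,2n} \end{pmatrix},
\]
a row of length $\tfrac{n(n+1)}{2}$ with entries in $\bigwedge F^{2n+1}$. I would then track what happens as the matrices $A^{(q)}$ are applied on the right, in turn, in either the l.h.s.\ product $A^{(1)}A^{(3)}\cdots A^{(2n+1)}$ or the r.h.s.\ product $A^{(2n)}\cdots A^{(2)}$. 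The key point is that, with the convention ``the leg from $A^{(i)}$ to $A^{(j)}$ carries $\phi^{i,j}$'', the action of each $A^{(q)}$ on the $n$ relevant entries of the current row is precisely~\eqref{phi-A}: the inputs $\phi^{a_{2i-1},q}$ are sent to the outputs $\phi^{q,a_{2j}}=-\phi^{a_{2j},q}$, the minus being absorbed by the antisymmetry of $\phi$ exactly as noted in the preceding discussion. Proposition~\ref{p:ind-rule} then supplies both that at every step there are exactly $n$ entries fitting~\eqref{gon-inds} (so each $A^{(q)}$ always has the right $n$ legs to act on), and that the final row is indexed by~\eqref{rgon}, for the l.h.s.\ and the r.h.s.\ alike. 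Denoting by $L$ and $R$ the two sides of~\eqref{(2n+1)-gon} viewed as $\tfrac{n(n+1)}{2}\times\tfrac{n(n+1)}{2}$ matrices, this shows $\Phi_{\mathrm{in}}L=\Phi_{\mathrm{in}}R=\Phi_{\mathrm{out}}$, where $\Phi_{\mathrm{out}}=\begin{pmatrix}\phi^{2,3}&\phi^{2,5}&\dots&\phi^{2n,2n+1}\end{pmatrix}$.

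It then remains to convert this row identity into the operator equality $L=R$. Writing $M=L-R$, the $k$-th coordinate of $\Phi_{\mathrm{in}}M$ is $\sum_{l}\phi_l\,M_l^{k}$, where the $\phi_l$ are the entries of $\Phi_{\mathrm{in}}$. Since this vanishes for every $k$ and the $\phi_l$ are linearly independent by Proposition~\ref{p:phi-plgn}, every entry $M_l^{k}$ is zero; hence $M=0$ and $L=R$, which is exactly the assertion of the theorem.

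The step I expect to be the main obstacle is the middle one: the careful index- and sign-bookkeeping needed to certify that the combinatorial pattern of Proposition~\ref{p:ind-rule} --- the fact that a typical entry $u_{2k-1,2l}$ is updated by $A^{(2k-1)}$ and $A^{(2l+1)}$ on the left but by $A^{(2l)}$ and $A^{(2k)}$ on the right --- is reproduced step by step by the multivector substitution~\eqref{phi-A}, with all antisymmetry signs consistently cancelling so that both sides terminate at \emph{literally} the same row $\Phi_{\mathrm{out}}$. Once this tracking is verified, the concluding linear-algebra argument via Proposition~\ref{p:phi-plgn} is immediate.
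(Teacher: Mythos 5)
Your proposal is correct and takes essentially the same route as the paper's own proof: the paper likewise places the multivectors $\phi^{i,j}$ on the legs of the diagrammatic $(2n+1)$-gon, propagates them through each $A^{(q)}$ via~\eqref{phi-A} with the antisymmetry convention absorbing the signs, observes that both sides share the same (linearly independent, by Proposition~\ref{p:phi-plgn}) input row and the same output row, and concludes $\mathrm{l.h.s.}=\mathrm{r.h.s.}$ Your explicit appeal to Proposition~\ref{p:ind-rule} for the index bookkeeping and your spelled-out conversion of the row identity $\Phi_{\mathrm{in}}L=\Phi_{\mathrm{in}}R$ into the matrix identity $L=R$ merely make precise steps the paper leaves implicit.
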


\begin{proof}
Draw the diagrammatic representation of the $(2n+1)$-gon equation, and write a $\phi^{i,j}$ at each edge as explained above. Then it turns out that:
\begin{itemize}\itemsep 0pt
 \item the $n(n+1)/2$ input $\phi$'s are the same for the l.h.s. and r.h.s.,
 \item they are linearly independent due to Proposition~\eqref{p:phi-plgn},
 \item and l.h.s. and r.h.s. gave the same output $\phi$'s.
\end{itemize} 
Hence, $\mathrm{l.h.s.} = \mathrm{r.h.s.}$
\end{proof}

It follows, of course, that \eqref{gon-inv} is satisfied by~$B^{(q)}$~\eqref{B-def}.

\begin{theorem}\label{th:simplex}
$2n$-simplex equation is satisfied by $R$-matrices~\eqref{RABP}, with $A^{(q)}$ as in~\eqref{A-def} and~$B^{(q)}$ as in~\eqref{B-def}.
\end{theorem}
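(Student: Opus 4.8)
The plan is to exploit the three-color decomposition from Proposition~\ref{p:d} together with the direct-sum structure established in Proposition~\ref{p:g}. Since the $R$-matrices have the factorized form~\eqref{RABP}, any side of the $2n$-simplex equation~\eqref{2n-simplex} acts as a direct sum of two operators: one on the combined blue-and-red sectors, and one on the green sector. Thus it suffices to verify the equation separately on each of these invariant pieces. On the blue-and-red sectors the work is already done: the blue part is exactly the $(2n+1)$-gon equation~\eqref{(2n+1)-gon}, which holds by Theorem~\ref{th:polygon}, and the red part is the inverse equation~\eqref{gon-inv}, which holds because $B^{(q)}=(A^{(q)})^{-1}$. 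So the entire burden falls on the \emph{green} sector.

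For the green sector, my strategy mirrors the proof of Theorem~\ref{th:polygon}, but using \emph{both} the rows of $\phi$'s and the \emph{columns} of $\psi$'s, as foreshadowed in the remarks following Proposition~\ref{p:AB}. The idea is to attach to each edge of the diagrammatic representation of the $2n$-simplex equation an appropriate multivector: by Proposition~\ref{p:g}$\mathrm{(ii)}$, every inner green position carries one index odd and one even, belonging to~\eqref{gon} or~\eqref{rgon}, and each such position is met exactly once. I would place a $\phi^{i,j}$ (for the odd-odd transitions handled by the $A^{(q)}$ submatrices via~\eqref{phi-A}) or a $\psi_{i,j}$ (for the even-even transitions handled via~\eqref{psi-A}, \eqref{psi-B}) on the corresponding legs, using the relations of Proposition~\ref{p:AB} to propagate them through each matrix. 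The crucial point is that~\eqref{psi-A} and~\eqref{psi-B} let $A^{(q)}$ and $B^{(q)}$ act on the $\psi$'s from the \emph{opposite} side (on columns rather than rows), which is exactly what the green sector requires since, by Proposition~\ref{p:g}$\mathrm{(i)}$, its initial and final positions coincide.

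Concretely, I would track how the four color-transition patterns listed in Proposition~\ref{p:g}$\mathrm{(ii)}\mathrm{(d)}$---namely {\color{blue}blue}--{\color{green}green}--{\color{red}red}, {\color{red}red}--{\color{green}green}--{\color{blue}blue}, {\color{green}green}--{\color{blue}blue}--{\color{green}green}, and {\color{green}green}--{\color{red}red}--{\color{green}green}---determine, for each position, whether a $\phi$ or a $\psi$ sits on the inner green leg, and whether the governing submatrix is $A^{(q)}$ or $B^{(q)}$. Then, as in Theorem~\ref{th:polygon}, I would argue that the input multivectors are the same for both sides of the equation, that they span a space of the expected dimension (using Proposition~\ref{p:green}, which gives linear independence of the odd-index $\phi$'s and the even-index $\psi$'s), and that both sides produce identical output multivectors because each elementary step is governed by the same Plücker-type identity from Proposition~\ref{p:AB}. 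Matching input and output multivectors across a spanning, linearly independent family forces l.h.s.\ $=$ r.h.s.

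The main obstacle I anticipate is \emph{bookkeeping in the green sector}: unlike the pure $(2n+1)$-gon case, where every edge carries a $\phi$ and a single matrix family acts, here the green sector interleaves $\phi$'s and $\psi$'s, and one must check that at each matrix the relevant relation from~\eqref{phi-A}--\eqref{psi-B} applies with the correct index conventions and signs, and that the antisymmetry of $\phi^{i,j}$ and $\psi_{i,j}$ reconciles the l.h.s.\ and r.h.s.\ orientations (just as the sign in~\eqref{phi-A} was absorbed in Theorem~\ref{th:polygon}). A secondary subtlety is confirming that the green span is \emph{exactly} $\frac{n(n+1)}{2}$-dimensional for the $\phi$'s and $\frac{n(n-1)}{2}$-dimensional for the $\psi$'s---the remark after Proposition~\ref{p:green} promises this will emerge here---so that the linear-independence argument has no leftover slack that could hide a discrepancy between the two sides.
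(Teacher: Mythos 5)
Your reduction to the three color sectors is sound and matches the paper's first step, but your argument for the green sector has a genuine gap at its concluding move: ``matching input and output multivectors across a spanning, linearly independent family forces l.h.s.\ $=$ r.h.s.'' The family you have is \emph{not} spanning. The green sector is $n^2$-dimensional, while the odd-index $\phi$'s supply only $\frac{n(n+1)}{2}$ linearly independent \emph{row} vectors, and the even-index $\psi$'s supply $\frac{n(n-1)}{2}$ \emph{column} vectors; by \eqref{psi-A} and \eqref{psi-B} the matrices act on the $\psi$'s from the opposite side, so the $\psi$'s cannot be pooled with the $\phi$'s into one spanning row family, nor can you propagate a single decoration that interleaves $\phi$'s and $\psi$'s on different legs (the paper uses two \emph{separate} decorations of the entire diagram: one with $\phi$'s acting on rows, one with sign-adjusted $\psi$'s acting on columns). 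Moreover, the combined data genuinely underdetermine a matrix: two matrices sharing an $a$-dimensional space of fixed rows and a $b$-dimensional column eigenspace to $-1$ with $a+b$ equal to the full dimension need not coincide. For instance, with $a=b=1$, every matrix $\begin{pmatrix} 1 & 0 \\ t & -1 \end{pmatrix}$ fixes the row $\begin{pmatrix} 1 & 0 \end{pmatrix}$ and negates the column $\begin{pmatrix} 0 & 1 \end{pmatrix}^{\mathrm T}$, for any $t$. So even after you verify all the bookkeeping you anticipate, your eigenvector information cannot by itself force l.h.s.\ $=$ r.h.s.

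The missing idea is the paper's involution trick. Since $B^{(q)}=(A^{(q)})^{-1}$, each $R^{(q)}$ of the form \eqref{RABP} satisfies $R^{(q)}=(R^{(q)})^{-1}$; as the r.h.s.\ of \eqref{2n-simplex} is the reversed product, it equals $(\text{l.h.s.})^{-1}$, and the whole equation collapses to the single identity $\text{l.h.s.}=(\text{l.h.s.})^{-1}$. With that reduction, your eigenspace data become exactly sufficient in characteristic $\ne 2$: the $\phi$-decoration exhibits a $+1$ row eigenspace of the green block of dimension at least $\frac{n(n+1)}{2}$, the $\psi$-decoration (with minus signs at inputs of odd-indexed and outputs of even-indexed $R$-matrices, as allowed by Proposition~\ref{p:g}) exhibits a $-1$ column eigenspace of dimension at least $\frac{n(n-1)}{2}$, and since $\frac{n(n+1)}{2}+\frac{n(n-1)}{2}=n^2$ the green block is diagonalizable with eigenvalues $\pm 1$, hence an involution --- this is also where the exactness of the dimensions you worried about falls out for free. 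Note finally that this route requires $\mathop{\mathrm{char}} F\ne 2$, a subtlety your draft does not address; the paper settles characteristic~2 separately, by observing that \eqref{2n-simplex} is a system of polynomial identities with integer coefficients in the entries of~$\mathcal M$ and reducing modulo~2 from characteristic~0.
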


\begin{proof}
First, we note that our matrices~$R^{(q)}$ of type~\eqref{RABP}, and with $B^{(q)} = (A^{(q)})^{-1}$, are obviously equal to their inverses:
\[
R^{(q)} = (R^{(q)})^{-1}.
\]
It follows immediately that the r.h.s.\ of the $2n$-simplex equation~\eqref{2n-simplex} is equal to $(\text{l.h.s.})^{-1}$. We have thus to prove that
\begin{equation}\label{chnot2}
\text{l.h.s.} = (\text{l.h.s.})^{-1}.
\end{equation}

Moreover, as we noted after Proposition~\ref{p:d}, the l.h.s.\ is a direct sum of matrices corresponding, first, to the red and blue sectors, and second, to the green sector. And there is no problem with blue and red sectors: they give, essentially, the same $(2n+1)$-gons already proved in Theorem~\ref{th:polygon} (alternatively, \eqref{chnot2} in the blue and red sectors is clear from the block matrix form~\eqref{br-block}). It remains thus to prove~\eqref{chnot2} only for the green sector submatrix.

First, we consider the case where our field~$F$ is \emph{not of characteristic~2}: $\mathop{\mathrm{char}} F\ne 2$. In this case, we will prove that the l.h.s.\ in the green sector is a diagonalizable matrix with eigenvalues only~$\pm 1$ (which obviously implies~\eqref{chnot2}).

We will do it as follows: first, using \emph{row} vectors made of our~$\phi $'s~\eqref{phi}, we will show the existence of \emph{at least} $\frac{n(n+1)}{2}$-dimen\-sional row eigenspace with eigenvalue~$+1$, then, using \emph{column} vectors made of our~$\psi $'s~\eqref{psi}, we will show the existence of \emph{at least} $\frac{n(n-1)}{2}$-dimen\-sional row eigenspace with eigenvalue~$-1$. This will clearly give us what we want, because together we get
\[
\frac{n(n+1)}{2} + \frac{n(n-1)}{2} = n^2,
\]
which is, as one can easily check, the full dimension of the green sector. Moreover, this will show of course that the dimensions of the mentioned eigenspaces are \emph{exactly} $\frac{n(n+1)}{2}$ and~$\frac{n(n-1)}{2}$.

First, we put the $\phi $'s on edges in the same manner as in the proof of Theorem~\ref{th:polygon}, and refer to Proposition~\ref{p:green}~$\mathrm{(i)}$. This gives (at least) a $n(n+1)/2$-dimensional linear row eigen(sub)space to~$+1$, as desired.
 
Second, we put the $\psi $'s on edges in almost the same manner as $\phi $'s, but adding minuses where needed. This is because of different signs in the right-hand sides of \eqref{psi-A} and~\eqref{psi-B} compared to \eqref{phi-A} and~\eqref{phi-B}. Namely, we put minuses at the input row entries of~$R^{(j)}$ with an odd~$j$, or/and output row entries of~$R^{(k)}$ with an even~$k$, and pluses otherwise. As one can check, this brings no contradictions because of Proposition~\ref{p:g}~$\mathrm{(ii)}$, and the signs agree with \eqref{psi-A} and~\eqref{psi-B}. Hence, we have, according to Proposition~\ref{p:green}~$\mathrm{(ii)}$, (at least) an $n(n-1)/2$-dimensional linear column eigen(sub)space to~$-1$.

Finally, we note that the case $\mathop{\mathrm{char}} F = 2$ follows easily from the case $\mathop{\mathrm{char}} F = \nobreak 0$. Namely, as soon as we have proved that $R$-matrices made of matrices \eqref{A-def} and~\eqref{B-def} satisfy the $2n$-simplex equation~\eqref{2n-simplex} in characteristic~0, and taking into account that \eqref{2n-simplex} can be re-written as a system of polynomial identities with integer coefficients (and matrix~$\mathcal M$~\eqref{M} entries as variables), we obtain \eqref{2n-simplex} in characteristic~2 simply by the reduction modulo~2 from characteristic~0.
\end{proof}

\begin{example}
For illustration of how the $\phi$'s and~$\psi$'s are placed on the l.h.s.\ of the 4-simplex equation diagram, see Figures \ref{fig:phis} and~\ref{fig:psis}.
 \begin{figure}
  \begin{center}
   \begin{tikzpicture}

\draw[green, thick] (1,0) -- (1,0.5) node[black, anchor=west] {$\phi^{3,1}$} -- (1,1) ;
\draw[green, thick] (3,0) -- (3,0.5) node[black, anchor=west] {$\phi^{5,1}$} -- (3,1) ;

\draw[green, thick] (5,0) -- (5,0.5) node[black, anchor=west] {$\phi^{4,2}$} -- (5,2) ;
\draw[green, thick] (8,0) -- (8,0.5) node[black, anchor=west] {$\phi^{5,3}$} -- (8,3) ;

\draw[green, thick] (0,1) -- (0,1.5) node[black, anchor=west] {$\phi^{1,2}$} -- (0,2) ;
\draw[green, thick] (2,1) -- (2,1.5) node[black, anchor=west] {$\phi^{1,4}$} -- (2,4) ;

\draw[green, thick] (4,2) -- (4,2.5) node[black, anchor=west] {$\phi^{2,3}$} -- (4,3) ;
\draw[green, thick] (6,2) -- (6,2.5) node[black, anchor=west] {$\phi^{2,5}$} -- (6,5) ;

\draw[green, thick] (1,3) -- (1,5.5) node[black, anchor=west] {$\phi^{3,1}$} -- (1,6) ;
\draw[green, thick] (7,3) -- (7,3.5) node[black, anchor=west] {$\phi^{3,4}$} -- (7,4) ;

\draw[green, thick] (5,4) -- (5,5.5) node[black, anchor=west] {$\phi^{4,2}$} -- (5,6) ;
\draw[green, thick] (9,4) -- (9,4.5) node[black, anchor=west] {$\phi^{4,5}$} -- (9,5) ;

\draw[green, thick] (3,5) -- (3,5.5) node[black, anchor=west] {$\phi^{5,1}$} -- (3,6) ;
\draw[green, thick] (8,5) -- (8,5.5) node[black, anchor=west] {$\phi^{5,3}$} -- (8,6) ;

\draw (0,1) node[black, anchor=east] {$R^{(1)} \hspace{ 1em }$} ;
\filldraw[gray] (0,0.95) rectangle (3,1.05) ;
\filldraw[gray] (0,1) circle (0.1);
\filldraw[gray] (1,1) circle (0.1);
\filldraw[gray] (2,1) circle (0.1);
\filldraw[gray] (3,1) circle (0.1);

\draw (0,2) node[black, anchor=east] {$R^{(2)} \hspace{ 1em }$} ;
\filldraw[gray] (0,1.95) rectangle (6,2.05) ;
\filldraw[gray] (0,2) circle (0.1);
\filldraw[gray] (4,2) circle (0.1);
\filldraw[gray] (5,2) circle (0.1);
\filldraw[gray] (6,2) circle (0.1);

\draw (0,3) node[black, anchor=east] {$R^{(3)} \hspace{ 1em }$} ;
\filldraw[gray] (1,2.95) rectangle (8,3.05) ;
\filldraw[gray] (1,3) circle (0.1);
\filldraw[gray] (4,3) circle (0.1);
\filldraw[gray] (7,3) circle (0.1);
\filldraw[gray] (8,3) circle (0.1);

\draw (0,4) node[black, anchor=east] {$R^{(4)} \hspace{ 1em }$} ;
\filldraw[gray] (2,3.95) rectangle (9,4.05) ;
\filldraw[gray] (2,4) circle (0.1);
\filldraw[gray] (5,4) circle (0.1);
\filldraw[gray] (7,4) circle (0.1);
\filldraw[gray] (9,4) circle (0.1);

\draw (0,5) node[black, anchor=east] {$R^{(5)} \hspace{ 1em }$} ;
\filldraw[gray] (3,4.95) rectangle (9,5.05) ;
\filldraw[gray] (3,5) circle (0.1);
\filldraw[gray] (6,5) circle (0.1);
\filldraw[gray] (8,5) circle (0.1);
\filldraw[gray] (9,5) circle (0.1);

\draw (0,-0.3) node[blue, anchor=center] {$12$} ;
\draw (1,-0.3) node[green, anchor=center] {$13$} ;
\draw (2,-0.3) node[blue, anchor=center] {$14$} ;
\draw (3,-0.3) node[green, anchor=center] {$15$} ;
\draw (4,-0.3) node[red, anchor=center] {$23$} ;
\draw (5,-0.3) node[green, anchor=center] {$24$} ;
\draw (6,-0.3) node[red, anchor=center] {$25$} ;
\draw (7,-0.3) node[blue, anchor=center] {$34$} ;
\draw (8,-0.3) node[green, anchor=center] {$35$} ;
\draw (9,-0.3) node[red, anchor=center] {$45$} ;

\draw (0,6.3) node[red, anchor=center] {$12$} ;
\draw (1,6.3) node[green, anchor=center] {$13$} ;
\draw (2,6.3) node[red, anchor=center] {$14$} ;
\draw (3,6.3) node[green, anchor=center] {$15$} ;
\draw (4,6.3) node[blue, anchor=center] {$23$} ;
\draw (5,6.3) node[green, anchor=center] {$24$} ;
\draw (6,6.3) node[blue, anchor=center] {$25$} ;
\draw (7,6.3) node[red, anchor=center] {$34$} ;
\draw (8,6.3) node[green, anchor=center] {$35$} ;
\draw (9,6.3) node[blue, anchor=center] {$45$} ;

\end{tikzpicture}
  \end{center}
 \caption{$\phi$'s in the green sector of 4-simplex equation, left-hand side}
 \label{fig:phis}

 \bigskip \bigskip

  \begin{center}
   \begin{tikzpicture}

\draw[green, thick] (1,0) -- (1,0.5) node[black, anchor=east] {$-\psi_{3,1}$} -- (1,1) ;
\draw[green, thick] (3,0) -- (3,0.5) node[black, anchor=east] {$-\psi_{5,1}$} -- (3,1) ;

\draw[green, thick] (5,0) -- (5,0.5) node[black, anchor=east] {$\psi_{4,2}$} -- (5,2) ;
\draw[green, thick] (8,0) -- (8,0.5) node[black, anchor=east] {$-\psi_{5,3}$} -- (8,3) ;

\draw[green, thick] (0,1) -- (0,1.5) node[black, anchor=east] {$\psi_{1,2}$} -- (0,2) ;
\draw[green, thick] (2,1) -- (2,1.5) node[black, anchor=east] {$\psi_{1,4}$} -- (2,4) ;

\draw[green, thick] (4,2) -- (4,2.5) node[black, anchor=east] {$-\psi_{2,3}$} -- (4,3) ;
\draw[green, thick] (6,2) -- (6,2.5) node[black, anchor=east] {$-\psi_{2,5}$} -- (6,5) ;

\draw[green, thick] (1,3) -- (1,5.5) node[black, anchor=east] {$\psi_{3,1}$} -- (1,6) ;
\draw[green, thick] (7,3) -- (7,3.5) node[black, anchor=east] {$\psi_{3,4}$} -- (7,4) ;

\draw[green, thick] (5,4) -- (5,5.5) node[black, anchor=east] {$-\psi_{4,2}$} -- (5,6) ;
\draw[green, thick] (9,4) -- (9,4.5) node[black, anchor=east] {$-\psi_{4,5}$} -- (9,5) ;

\draw[green, thick] (3,5) -- (3,5.5) node[black, anchor=east] {$\psi_{5,1}$} -- (3,6) ;
\draw[green, thick] (8,5) -- (8,5.5) node[black, anchor=east] {$\psi_{5,3}$} -- (8,6) ;

\draw (0,1) node[black, anchor=east] {$R^{(1)} \hspace{ 2em }$} ;
\filldraw[gray] (0,0.95) rectangle (3,1.05) ;
\filldraw[gray] (0,1) circle (0.1);
\filldraw[gray] (1,1) circle (0.1);
\filldraw[gray] (2,1) circle (0.1);
\filldraw[gray] (3,1) circle (0.1);

\draw (0,2) node[black, anchor=east] {$R^{(2)} \hspace{ 2em }$} ;
\filldraw[gray] (0,1.95) rectangle (6,2.05) ;
\filldraw[gray] (0,2) circle (0.1);
\filldraw[gray] (4,2) circle (0.1);
\filldraw[gray] (5,2) circle (0.1);
\filldraw[gray] (6,2) circle (0.1);

\draw (0,3) node[black, anchor=east] {$R^{(3)} \hspace{ 2em }$} ;
\filldraw[gray] (1,2.95) rectangle (8,3.05) ;
\filldraw[gray] (1,3) circle (0.1);
\filldraw[gray] (4,3) circle (0.1);
\filldraw[gray] (7,3) circle (0.1);
\filldraw[gray] (8,3) circle (0.1);

\draw (0,4) node[black, anchor=east] {$R^{(4)} \hspace{ 2em }$} ;
\filldraw[gray] (2,3.95) rectangle (9,4.05) ;
\filldraw[gray] (2,4) circle (0.1);
\filldraw[gray] (5,4) circle (0.1);
\filldraw[gray] (7,4) circle (0.1);
\filldraw[gray] (9,4) circle (0.1);

\draw (0,5) node[black, anchor=east] {$R^{(5)} \hspace{ 2em }$} ;
\filldraw[gray] (3,4.95) rectangle (9,5.05) ;
\filldraw[gray] (3,5) circle (0.1);
\filldraw[gray] (6,5) circle (0.1);
\filldraw[gray] (8,5) circle (0.1);
\filldraw[gray] (9,5) circle (0.1);

\draw (0,-0.3) node[blue, anchor=center] {$12$} ;
\draw (1,-0.3) node[green, anchor=center] {$13$} ;
\draw (2,-0.3) node[blue, anchor=center] {$14$} ;
\draw (3,-0.3) node[green, anchor=center] {$15$} ;
\draw (4,-0.3) node[red, anchor=center] {$23$} ;
\draw (5,-0.3) node[green, anchor=center] {$24$} ;
\draw (6,-0.3) node[red, anchor=center] {$25$} ;
\draw (7,-0.3) node[blue, anchor=center] {$34$} ;
\draw (8,-0.3) node[green, anchor=center] {$35$} ;
\draw (9,-0.3) node[red, anchor=center] {$45$} ;

\draw (0,6.3) node[red, anchor=center] {$12$} ;
\draw (1,6.3) node[green, anchor=center] {$13$} ;
\draw (2,6.3) node[red, anchor=center] {$14$} ;
\draw (3,6.3) node[green, anchor=center] {$15$} ;
\draw (4,6.3) node[blue, anchor=center] {$23$} ;
\draw (5,6.3) node[green, anchor=center] {$24$} ;
\draw (6,6.3) node[blue, anchor=center] {$25$} ;
\draw (7,6.3) node[red, anchor=center] {$34$} ;
\draw (8,6.3) node[green, anchor=center] {$35$} ;
\draw (9,6.3) node[blue, anchor=center] {$45$} ;

\end{tikzpicture}
  \end{center}
 \caption{$\psi$'s in the green sector of 4-simplex equation, left-hand side}
 \label{fig:psis}
 \end{figure}
\end{example}

\begin{example}\label{x:n=1}
It is instructive to examine the ``trivial'' case $n = 1$: the \emph{trigon} equation~\cite{DM-H} and the 2-simplex (Yang--Baxter) equation. Matrix~$\mathcal M$~\eqref{M} representing an element of Grassmannian $\Gr(2,3)$ is then
\[
\mathcal M =
 \begin{pmatrix} \alpha_1 & \alpha_2 & \alpha_3 \\
 \beta_1 & \beta_2 & \beta_3 \end{pmatrix},
\]
and we find
\[
A^{(1)} = -\frac{p_{1,3}}{p_{1,2}},\qquad
A^{(2)} = \frac{p_{2,3}}{p_{1,2}},\qquad
A^{(3)} = -\frac{p_{2,3}}{p_{1,3}};
\]
the trigon equation is
\[
A^{(1)} A^{(3)} = A^{(2)} .
\]
The analogue of Figure~\ref{fig:pentagon} looks as follows:
\[
\text{lhs:}\quad 12  \xrightarrow{A^{(1)}} 13  \xrightarrow{A^{(3)}} 23 ,\qquad\qquad \text{rhs:}\quad 12 \xrightarrow{A^{(2)}} 23.
\]
Formula~\eqref{RABP} is reduced to $R^{(q)} = A_1^{(q)} B_2^{(q)} P_{12}$, hence
\[
R^{(1)} = \begin{pmatrix} 0 & -\frac{p_{1,3}}{p_{1,2}} \\ -\frac{p_{1,2}}{p_{1,3}} & 0 \end{pmatrix}, \qquad
R^{(2)} = \begin{pmatrix} 0 &  \frac{p_{2,3}}{p_{1,2}} \\  \frac{p_{1,2}}{p_{2,3}} & 0 \end{pmatrix}, \qquad
R^{(3)} = \begin{pmatrix} 0 & -\frac{p_{2,3}}{p_{1,3}} \\ -\frac{p_{1,3}}{p_{2,3}} & 0 \end{pmatrix},
\]
which is a trivial solution of the \emph{entwining}~\cite{KP} Yang--Baxter equation~\eqref{YB}:
\begin{equation}\label{YB again}
R_{12}^{(1)} R_{13}^{(2)} R_{23}^{(3)} = R_{23}^{(3)} R_{13}^{(2)} R_{12}^{(1)}.
\end{equation}

On the other hand, we can interpret the upper index as a parameter. For example setting
\[
\mathcal M = \begin{pmatrix} 1 & -\mu & 0 \\ 0 & -\lambda & 1 \end{pmatrix}, \qquad
R(\lambda ) = \begin{pmatrix} 0 & \lambda \\ 1 / \lambda & 0 \end{pmatrix},
\]
we get
\[
R^{(1)} = R(\lambda ), \quad R^{(2)} = R(\lambda  \mu ), \quad R^{(3)} = R(\mu )
\]
and equation~\eqref{YB again} becomes
\[
R_{12} (\lambda ) R_{13} (\lambda \mu ) R_{23} (\mu ) = R_{23} (\mu ) R_{13} (\lambda \mu ) R_{12} (\lambda ).
\]
\end{example}

\section{Discussion}\label{s:d}

We have constructed new ``nonconstant'' solutions to polygon and simplex equations, in their ``direct-sum'' form. Our construction has been known earlier only for the simple case of pentagon~\cite{KS}. In the heptagon case, there are also different solutions~\cite{heptagon}, or at least their possible relations to the present paper is not yet known. Our results may be applied to both topology of piecewise linear manifolds and as a starting point in search for solutions of quantum equations to be used in mathematical physics. Also, it may be interesting to try and apply our nonconstant Yang--Baxter to constructing invariants of knots and knotted surfaces.

Concerning directions of further research, we can mention the following.

Our ``polygons'' considered here had always an odd number~$(2n+1)$ of ``vertices'', so we have left out of our consideration half of the possible equations. Among this half, there are certainly very interesting equations that deserve a separate research, such as a direct sum hexagon~\cite{K-nc,K-c} (note also a quantum version of hexagon in~\cite[Eq.~(21)]{Kash-simple}).

One more interesting direction of research may be searching for \emph{non-commutative} generalizations of our constructions using a division ring instead of our field~$F$, and quasideterminants and quasi-Pl\"ucker coordinates~\cite{GGRW}.

And of course the natural next step consists in studying the \emph{cohomology} of our solutions, as one more possible step towards constructing solutions to \emph{quantum} equations. In these latter, $A^{(q)}$ or~$R^{(q)}$ or their analogues are linear operators acting in the \emph{tensor product} of the corresponding spaces (identified of course with their tensor products with the identities in the lacking spaces). The simplest quantum solutions are obtained from set-theoretic solutions as follows: for each~$i$, consider the vector space~$V_i$ over a field~$K$ whose \emph{basis} is~$X_i$---that is, such $V_i$ consists of formal (finite) linear combinations
\[
\alpha x_i + \beta y_i + \dots, \quad \alpha, \beta,\ldots \in K, \quad x_i,y_i, \ldots \in X_i,
\]
and define, say, $R^{(q)}_{\mathrm{quantum}}$ for Yang--Baxter requiring that
\begin{equation}\label{YBq}
\text{if}\quad R^{(q)}_{ij}(x_i,y_j) = (z_i,t_j),\quad\text{then}\quad R^{(q)}_{\mathrm{quantum}} (x_i\otimes y_j)=(z_i\otimes t_j).
\end{equation}
Cohomology arises if we add \emph{multipliers} to~\eqref{YBq}, that is, set
\[
R^{(q)}_{\mathrm{quantum}} (x_i\otimes y_j) = c(x_i,y_j)(z_i\otimes t_j), \quad c(x_i,y_j)\in K^*,
\]
where $K^*$ is the multiplicative group of~$K$. A similar definition for other simplex or polygon equations can of course be also done; note that $K$ may be different from our field~$F$, for instance, one can take $K=\mathbb C$, while $F$ being a finite field. More about such cohomology theories can be found in~\cite{KST,K-nc,K-c}.

\bibliographystyle{plain}
\bibliography{paper.bib}

\end{document}